\newcommand{\cK}{{\mathcal K}}
\newcommand{\cP}{{\mathcal P}}
\newcommand{\bN}{{\mathbf N}}
\newcommand{\R}{{\mathbf R}}
\newcommand{\AGeo}{{\rm CGeo}}
\newcommand{\AGeol}{\AGeo^\ell}
\newcommand{\CD}{{\rm CD}}
\newcommand{\Dom}{\mathop{\rm Dom}}
\newcommand{\spt}{\mathop{\rm spt}}
\newcommand{\p}{{\partial}}
\newcommand{\LPS}{LPLS}
\newcommand{\NC}{{\rm NC}}
\newcommand{\NE}{{\rm NE}}
\newcommand{\OptTGeoq}{\mbox{\rm OptTGeo}_q^\ell}
\newcommand{\qopt}{{\Gamma^{q}}}
\newcommand{\qopT}{{\Gamma_\ll^{q}}}
\newcommand{\Rc}{{\rm Ric}}
\newcommand{\RCD}{{\rm RCD}}
\newcommand{\TCD}{{\rm TCD}}
\newcommand{\TGeo}{{\rm TGeo}}
\newcommand{\TGeol}{\TGeo^\ell}
\newcommand{\vol}{{\rm vol}}
\newcommand{\vg}{\vol_g}
\newtheorem{theorem}{Theorem}
\newtheorem{corollary}[theorem]{Corollary}
\newtheorem{definition}[theorem]{Definition}
\newtheorem{example}[theorem]{Example}
\newtheorem{lemma}[theorem]{Lemma}
\newtheorem{remark}[theorem]{Remark}
\newenvironment{proof}[1][Proof]{\noindent\textbf{#1.} }{\ \rule{0.5em}{0.5em}}
\newdimen\dummy
\begin{document}

%\shorttitle{A synthetic null energy condition}
\title{A synthetic null energy condition\thanks{%
2020 MSC Classification Primary: 51K10 Secondary: 49Q22 51F99 53Z99 83C99. 
The author's research is supported in part by the Canada Research Chairs program CRC-2020-00289, the Simons Foundation, Natural Sciences and Engineering Research Council of Canada Discovery Grant RGPIN- 2020--04162, and Toronto's Fields Institute for the Mathematical Sciences, where part of this work was performed. He is grateful to Mathias Braun, Nicola Gigli, Christian Ketterer, Clemens S\"amann, Eric Woolgar and three referees for many stimulating exchanges.
\copyright \today\ by the author.
}}
\author{Robert J McCann\thanks{%
Department of Mathematics, University of Toronto, Toronto, Ontario, Canada
\tt mccann@math.toronto.edu} % {\em DRAFT: NOT FOR CIRCULATION}
}
%\address{toronto}

\maketitle

\begin{abstract}
We give a simplified approach to Kunzinger \& S\"amann's theory of Lorentzian length spaces
in the globally hyperbolic case;
these provide a nonsmooth framework for general relativity.
We close a gap in the regularly localizable setting, by showing consistency of two potentially different notions of timelike geodesic segments used in the literature.  
In the smooth psuedo-Riemannian setting, we show Penrose' null energy condition is equivalent to a variable lower bound on the timelike Ricci curvature.
This allows us to give a nonsmooth reformulation of the null energy condition using the timelike curvature-dimension conditions
of Cavalletti \& Mondino (and Braun).  Although this definition is consistent with the smooth setting,  it proves unstable
relative to the notion of pointed measured convergence for which timelike curvature-dimensions conditions are known to be stable.  
We illustrate this instability 
using a sequence of smooth weighted Lorentzian manifolds-with-boundary that satisfy it,  yet converge to a disconnected pair of timelike related points that violate it in the limit. 
%This yields new insights even in the smooth setting.
%This is based on new insights into the null energy condition in the smooth and weighted smooth settings.
\end{abstract}

From both the mathematical and physical points of view,  a nonsmooth theory of gravity which relaxes or replaces the smooth manifolds required by Einstein's relativity appears highly desirable, as detailed e.g.~in \cite{CavallettiMondino22} and its references.  
The relationship between such a theory and Lorentzian geometry is analogous to the relationship of metric \cite{BuragoBuragoIvanov01} or metric measure \cite{Villani09} geometry to the classical theory of smooth Riemannian manifolds.  Several key steps in this direction have already been taken.  Striking among them are the axiomatization of a theory of Lorentzian length spaces and sectional curvature bounds by Kunzinger \& S\"amann \cite{KunzingerSaemann18}  
(foreshadowed in Kronheimer \& Penrose \cite{KronheimerPenrose67}, Andersson \& Howard \cite{AnderssonHoward98} and Alexander \& Bishop \cite{AlexanderBishop08}), and
 the development of a theory of the Einstein equation and timelike Ricci curvature bounds in this setting by Cavalletti \& Mondino \cite{CavallettiMondino20+} (foreshadowed in work by %\cite{McCann20} and 
 Mondino \& Suhr \cite{MondinoSuhr23} and myself \cite{McCann20}).  
 By %restricting our attention mostly 
 focusing our attention on globally hyperbolic,
regularly localizable spaces we are able to give a simple introduction to this theory, and clarify the equivalence between different notions of timelike geodesics proposed in \cite{KunzingerSaemann18} and \cite{CavallettiMondino20+} \cite{McCann20} respectively.  
In this setting we are able to indicate how to eliminate the dependence of the theory on an auxiliary metrical distance;
compare also to the alternative synthetic frameworks proposed by Minguzzi \& Suhr \cite{MinguzziSuhr22+} and Mueller \cite{Mueller22+}.
Moreover, we resolve a question raised by Cavalletti \& Mondino \cite{CavallettiMondino22}, by giving a nonsmooth reformulation of Penrose' null energy condition (and its weighted variants) --- based on an equivalence we show in the smooth setting --- of the null energy condition to a variable lower bound on the timelike Ricci curvature.   Although technically straightforward, this development is significant both %for a couple of reasons 
because the null energy condition is expected to be satisfied by all forms of matter (unlike 
%some of the other energy conditions whose ranges of validity are more limited 
the strong energy or timelike convergence conditions \cite{Carroll04}) and because in its smooth incarnation, the null energy condition plays a central role in the Penrose singularity theorem \cite{Penrose65a} and sequella surveyed in \cite{Landsman21book} \cite{Landsman21}.   

Although the weak and strong energy conditions are both well-known to imply the null energy condition,
it is worth emphasizing that a variable lower bound of either sign on the timelike Ricci curvature or ---
when the Einstein field equations are satisfied, the stress-energy tensor ---
also implies nonnegativity of null Ricci curvature hence the null energy condition.  Various theorems which rely on the null energy condition,
such as the Penrose singularity theorem \cite{HawkingEllis73} \cite{Penrose65a}, Hawking's monotonicity of area \cite{Hawking72}, 
Galloway's null splitting theorem \cite{Galloway00}, and its consequences such as toplogical censorship \cite{EichmairGallowayPollack13}
%, and uniqueness theorems based on the null-splitting theorem 
therefore also hold under local timelike lower bounds on either the stress energy tensor or the Ricci curvature.

This manuscript is structured as follows.  In the first section
 we give an alternate description of key concepts from Kunzinger \& S\"amann's synthetic geometric framework \cite{KunzingerSaemann18},
 focusing on the globally hyperbolic, regular(ly localizable) case.  Our signed time-separation function $\ell$ plays a central role:
 it captures both causality and chronology, while the non-negative function $\ell_-:=\max\{-\ell,0\}$ of \cite[Remark 2.9]{KunzingerSaemann18} captures only chronology. 
We clarify the relations between different notions of timelike geodesics found in the literature,  focusing on their continuity or potential lack thereof,
and propose a new notion of affinely parameterized
lightlike geodesic.
We also note the auxiliary metric distance function $d$ of \cite{KunzingerSaemann18} 
becomes redundant in the setting of globally hyperbolic Lorentzian length spaces, optionally regular as in  Remark \ref{R:topologies}.
In Section \ref{S:Ricci} we recall how timelike lower Ricci curvature bounds were imposed on such a space by Cavalletti \& 
Mondino \cite{CavallettiMondino20+}, and some subsequent enhancements by Braun~\cite{Braun22.5+} \cite{Braun22.6+}.  In the final section,  we propose a new nonsmooth null
energy-dimension condition,  and show it is consistent with the classical (and Bakry-\'Emery) null energy conditions in the setting of smooth 
(possibly weighted) globally hyperbolic Lorentzian spacetimes.  This answers a question of Cavalletti \& Mondino \cite{CavallettiMondino22}.
Like their timelike curvature dimension condition
upon which it is based,  our null energy condition has structural consequences to be explored in \cite{BraunMcCann23p},
including inequalities and needle decompositions of the volume similar to those used to proved a Hawking type singularity theorem in  \cite{CavallettiMondino22};  
however, unlike their timelike curvature dimension
condition,  %we expect it to be
it is unstable in the sense that it need not hold even for a limit of smoothly weighted manifolds that satisfy it.

\section{An alternative approach to Lorentzian length spaces}
\label{S:ghrLLS}

Our theory is set in the Lorentzian (pre)length spaces of Kunzinger \& S\"amann.  
We begin by providing an alternative description of that setting, 
which at a slight cost in generality, simplifies and streamlines the theory by encoding all causality relations 
in a version of time-separation function $\ell$ modifed to take both signs \cite{McCann20}.
Reversing the traditional sign of this function allows us to use the same definitions of, e.g., length and geodesics, for
the time-separation $\ell$ as for a metric space distance $d$.
With this convention the ranges of $\ell$ and $d$ become complementary except on the diagonal.
We use boldface and italics to distinguish those definitions which have {\bf global} importance from those which are used merely {\em locally} or are {\em less central} 
in the present manuscript.

On a set $M$,  a {\bf time-separation function} will therefore refer to a function 
$\ell:M^2 \longrightarrow [-\infty,0] \cup \{\infty\}$ satisfying $\ell(x,x) \le 0$ for all $x,y,z\in M$ plus the triangle inequality
\begin{align}
%\ell(x,x) & \in \{-\infty,0\}, & %\forall x,y,z  \in X.\ \mbox{\rm the  triangle inequality} \\  
\ell(x,z) \le \ell(x,y) + \ell(y,z) & {\rm\ if}\ \max\{\ell(x,y),\ell(y,z)\} < \infty.%\ {\rm and} %\ {\rm and}
\label{causal triangle inequality}
\end{align}

These axioms together imply $\ell(x,x) \in \{-\infty,0\}$,  though we shall often assume
$\ell^{-1}(-\infty)$ is empty, in which case $\ell(x,x)=0$, 
and the triangle inequality holds even when $\max\{\ell(x,y),\ell(y,z)\}=\infty$.
%in the globally hyperbolic case more is true: \eqref{antisymmetry} follows from \eqref{causal triangle inequality}.
They also imply that the negativity and nonpositivity sets $M_\ll^2 = \{ \ell < 0\}$ 
and $M_\le^2 = \{ \ell \le 0\}$  of $\ell$ both denote transitive relations
$\ll$ and $\le$ between members of $M$, with $\le$ being reflexive (i.e. a pre-order) as well.
The points of $M$ are interpreted as spacetime locations or {events}, and the sign of $\ell(x,y)$ 
determines whether or not it is possible for information to pass from $x$ to $y$: we say $x$ lies in the {\bf  causal
past }of $y$ and write $x \le y$ if $\ell(x,y) \le 0$;   we say $x$ lies in the {\bf timelike} or {\bf chronological 
past} of $y$ and write $x \ll y$ if $\ell(x,y) < 0$.   In either case we say $y$ lies in the %corresponding %(causal or timelike/chronological) 
{\bf future} of $x$.  For each point $x \in M$,  one denotes the 
timelike and causal futures by
$$I^+(x):= \ell(x,\cdot)^{-1}([-\infty,0))\ {\rm and}\ J^+(x):= \ell(x,\cdot)^{-1}([-\infty,0])$$  
and the pasts by
$$I^-(y):= \ell(\cdot,y)^{-1}(([-\infty,0))\ {\rm and}\ J^-(y):= \ell(\cdot,y)^{-1}([-\infty,0]).
$$  
In Remark \ref{R:topologies}, %. %the end of present section, 
we describe how the globally hyperbolic Lorentzian length
spaces of %Kunzinger \& S\"amann 
\cite{KunzingerSaemann18} recalled below are uniquely determined by their time-separation function,
even if the Lipschitz curves within them are not.  %see Remark \ref{R:topologies}.

If for all $x,y \in M$, the time-separation function also obeys the {\bf antisymmetry} condition 
\begin{align}
 \label{antisymmetry}
\max\{\ell(x,y),\ell(y,x)\} < \infty &\mbox{\rm\ if and only if}\ x=y,
\end{align} 
then the induced relations $M^2_\ll \subset M^2_\le$ are antisymmetric (so $\le$ is a partial-ordering) and the pair % pre-spacetime
$(M,\ell)$ becomes an example of %Kunzinger \& S\"amann \cite{KunzingerSaemann18}
a {\em causal space} $(M,\le,\ll)$ %after Kronheimer \& Penrose 
in Kronheimer \& Penrose' terminology \cite{KunzingerSaemann18} \cite{KronheimerPenrose67}
(but slightly less general because, e.g., if the relations come from a time-separation
function then they enjoy the {\bf push-up property} that $x \le y \ll z$ or $x \ll y \le z$ implies $x \ll z$).
This antisymmetry can always be achieved by replacing events $x \in M$ with equivalence classes $J(x,x)$, as the next lemma shows,
analogously to the standard quotient construction in positive signature \cite{BuragoBuragoIvanov01}.
Antisymmetry holds automatically in the globally hyperbolic causally curve-connected spaces introduced below,
where each equivalence class turns out to consist of a single element.

\begin{lemma}[Antisymmetry via quotienting]
%In a pre-spacetime $(M,\ell)$,
Given a time-separation function $\ell$ on $M$, 
denote $x \sim y$ if and only if $\max\{\ell(x,y),\ell(y,x)\} \le 0$. This defines an equivalence relation on $M$.
Setting 
\begin{equation}\label{tilde ell} 
\tilde \ell(\tilde x,\tilde y) := \inf_{x \in \tilde x, y \in \tilde y} \ell(x,y)
\end{equation}
defines a time-separation function $\tilde \ell$ on the equivalence classes $\tilde x := \{ y \in M \mid y \sim x\}$. Moreover,
$\tilde \ell$ satisfies \eqref{antisymmetry}
on the quotient space $\tilde M := M / \sim$.
\end{lemma}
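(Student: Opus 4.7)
The plan is to dispatch the three assertions in turn, with the triangle inequality \eqref{causal triangle inequality} doing essentially all of the work. First, for $\sim$: reflexivity follows from $\ell(x,x) \le 0$, symmetry from the $\max$ in the definition, and transitivity from two applications of the triangle inequality to the triples $(x,y,z)$ and $(z,y,x)$. Each application is admissible because, for $x \sim y \sim z$, the four pairwise separations $\ell(x,y),\ell(y,x),\ell(y,z),\ell(z,y)$ are at most $0$ and so in particular finite.

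Next, to verify the time-separation axioms for $\tilde\ell$, the bound $\tilde\ell(\tilde x,\tilde x) \le \ell(x,x) \le 0$ is immediate from taking the representative $x = y$ in \eqref{tilde ell}. For the triangle inequality, I would argue by approximation: given $\epsilon > 0$, pick representatives $x_1 \in \tilde x$, $y_1, y_2 \in \tilde y$, $z_1 \in \tilde z$ with $\ell(x_1,y_1) < \tilde\ell(\tilde x,\tilde y) + \epsilon$ and $\ell(y_2,z_1) < \tilde\ell(\tilde y,\tilde z) + \epsilon$, and telescope via $\ell(x_1,z_1) \le \ell(x_1,y_1) + \ell(y_1,y_2) + \ell(y_2,z_1)$, using $\ell(y_1,y_2) \le 0$ from $y_1 \sim y_2$. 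Since the left-hand side dominates $\tilde\ell(\tilde x,\tilde z)$, sending $\epsilon \to 0$ closes the argument. Permitting $y_1 \neq y_2$ is what makes the argument go through when the infimum in \eqref{tilde ell} is not attained.

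Finally, for antisymmetry, the ``if'' direction is trivial because $\tilde\ell(\tilde x,\tilde x) \le 0 < \infty$. For the converse, assume $\tilde\ell(\tilde x,\tilde y), \tilde\ell(\tilde y,\tilde x) < \infty$. Since $\ell$ takes values in $[-\infty,0] \cup \{\infty\}$, each of these infima being strictly less than $\infty$ forces the existence of representatives $x' \in \tilde x$, $y' \in \tilde y$ with $\ell(x',y') \le 0$, and similarly in the reverse direction. For arbitrary $x \in \tilde x$, $y \in \tilde y$ I would then chain $x \to x' \to y' \to y$ and apply the triangle inequality twice — all intermediate separations being nonpositive hence finite — to get $\ell(x,y) \le 0$, and analogously $\ell(y,x) \le 0$; this gives $x \sim y$ and hence $\tilde x = \tilde y$.

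The main thing to monitor throughout is the finiteness hypothesis in \eqref{causal triangle inequality}: each invocation of the triangle inequality requires the relevant pair to be strictly less than $+\infty$. In every step above the relevant terms are either equivalence-class comparisons (automatically $\le 0$) or have been pre-selected via the infimum to be finite, so I do not foresee any real obstacle.
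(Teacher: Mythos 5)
Your proof is correct and follows essentially the same route as the paper's: reflexivity/symmetry/transitivity of $\sim$ via the sign constraint and triangle inequality, the triangle inequality for $\tilde\ell$ by choosing two (possibly distinct) representatives of the middle class and telescoping, and antisymmetry by picking representatives with nonpositive separations in both directions and chaining. The only cosmetic difference is that you run the triangle-inequality step via an $\epsilon$-approximation while the paper takes infima directly after establishing the pointwise estimate; both are fine, and your explicit attention to the finiteness hypothesis in \eqref{causal triangle inequality} matches the paper's opening observation that $\ell(x,y)<\infty$ forces $\ell(x,y)\le 0$.
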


\begin{proof}
This proof relies heavily on the fact that $\ell(x,y) <\infty$ implies $\ell(x,y) \le 0$.
The relation $x \sim y$ is clearly symmetric.  Transitivity follows from the the triangle inequality \eqref{causal triangle inequality};
reflexivity follows from $\ell(x,x) \le 0$.  To establish the triangle inequality for $\tilde \ell$, it suffices to assume there exist 
$x \in \tilde x, z \in \tilde z$, and $y \sim y' \in \tilde y$ such that $\max\{\ell(x,y),\ell(y',z)\} <\infty$; otherwise there is nothing to prove.
Assuming this existence, since $\ell(y,y') \le 0$ the triangle inequality implies 
%$\ell(x,y') \le 0$ and $\ell(y,z)\le 0$, so
\begin{align*}
\ell(x,z) & \le \ell(x,y) +  \ell(y,y') + \ell(y',z)
\\ & \le \ell(x,y) + \ell(y',z);
\end{align*}
 taking infima over $x \in \tilde x, z \in \tilde z$ and $y,y' \in \tilde y$ yields the triangle inequality for $\tilde \ell$.

Since the {\em if} part of the antisymmetry \eqref{antisymmetry} is already established,  we turn to the {\em only if} claim.
Therefore, assume $\max\{\tilde \ell(\tilde x,\tilde y),\tilde \ell(\tilde y,\tilde x)\} <\infty$ for some $\tilde x,\tilde y \in \tilde M$.
Then there exist $x\sim x' \in \tilde x$ and $y\sim y' \in \tilde y$ such that $\ell(x,y) \le 0$ and $\ell(y',x') \le 0$.
Now the triangle inequality implies
\begin{align*}
\ell(y,x) &\le \ell(y,y') + \ell(y',x') + \ell(x',x) \le 0
\end{align*}
hence $x \sim y$ and $\tilde x = \tilde y$ as desired.
\end{proof}

A path $\sigma:A \longrightarrow M$ defined on some interval $A\subset \R$ is said to be {\bf causal} if 
$\ell(\sigma(s), \sigma(t))\le 0$ for all real parameters $s<t$ in $A$; if the inequality is strict the path is said to be {\bf timelike},
whereas if equality always holds %yet the path has no intervals of constancy 
it is said to be {\bf lightlike} or {\bf null}.
In the non-smooth theory, it is convenient to take causal paths to be future-directed by convention. In contrast
to the {\em curves} we shall presently introduce, no limit on the roughness of {\bf paths} is assumed; when we want to emphasize this potential
lack of continuity,
we call them {\em rough} causal paths.
The {\bf $\ell$-length} (or Lorentzian length) of a causal path $\sigma:[a,b]\subset \R \longrightarrow M$ is defined by
\begin{equation}\label{Lorentzian length}
L_\ell(\sigma) :=  \sup \{ \sum_{i=1}^N \ell(\sigma(s_{i-1}),\sigma(s_{i}))   \mid N \in \bN, a=s_0 < s_1 < \ldots < s_N=b\};
\end{equation}
%where $a$ and $b$ are the endpoints of $I$; 
its magnitude represents the amount of time a particle ages while travelling this path.
The triangle inequality yields $L_\ell(\sigma) \in [\ell(\sigma(a),\sigma(b)),0]$, 
in contrast to the conclusion $L_d(\sigma) \in [d(\sigma(a),\sigma(b)),\infty]$ which holds when a metric distance function $d\ge 0$ is substituted for $\ell$ in \eqref{Lorentzian length}.
For non-compact intervals $A$ we %can approximate
 define $L_\ell(\sigma)$ using an increasing sequence of compact subintervals whose union is $A$.
To disambiguate several distinct definitions of $\ell$-geodesic which appear in the literature \cite{KunzingerSaemann18} \cite{McCann20} \cite{CavallettiMondino20+}
we define (timelike) {\bf $\ell$-paths} to be those $\sigma:[0,1]\longrightarrow M$ satisfying %(or {\em $\ell$-geodesics})
\begin{align}
%\TPl &= \TPl(X) :=\{\sigma:[0,1] \longrightarrow M \mid  \eqref{Geol}\ {\rm holds} \}\\
\label{Geol} \ell(\sigma(s),\sigma(t)) &= (t-s) \ell(\sigma(0),\sigma(1)) \not\in\{0,\pm\infty\} \qquad \forall\ 0 \le s < t \le 1;
\end{align}
they are affinely parameterized with respect to proper time by convention.
Since $L_\ell(\sigma)=\ell(\sigma(0),\sigma(1))$, the triangle inequality shows as much time elapses following
$\sigma$ as along any other causal path with the same endpoints.
We say $(M,\ell)$ is a {\bf timelike $\ell$-path space} if %{\re $\ell^{-1}(-\infty)=\emptyset$ and}
each pair of timelike related points $x \ll y$
are the endpoints for some $\ell$-path $\sigma$. % whenever $\ell(x,y) \ne -\infty$. % \in \TPl$. 
We follow \cite{CavallettiMondino20+} by defining $(M,\ell)$ to be {\bf timelike nonbranching} 
unless there exist a pair of distinct $\ell$-paths $\sigma \ne \tilde \sigma$
which coincide on an open interval, $\sigma|_{(s,t)}=\tilde \sigma|_{(s,t)}$ for some $0 \le s < t \le1$,
where $\sigma|_B$ denotes the restriction of $\sigma$ to $B \subset [0,1]$.

Except in the smooth setting \cite{McCann20},
it is not obvious whether $\ell$-paths enjoy any continuity properties.
However, in the (nonsmooth) globally hyperbolic regular Lorentzian geodesic 
spaces described below, 
Corollary~\ref{C:geodesics v extremizers} proves they have a Lipschitz continuous reparameterization.

We call a metric space $(M,d)$ equipped with its metric topology and a time-separation function $\ell$ a {\bf metric spacetime}.
%{\em Lorentzian pre-prelength space}, or LPPS hereafter.
Kunzinger \& S\"amann call a metric spacetime %Lorentzian pre-prelength space
%a metric space $(M,d)$ equipped with a time-separation function $\ell$ whose negative part 
in which $\ell_-=\max\{-\ell,0\}$
is lower semicontinuous %then the metric spacetime becomes 
a {\em Lorentzian prelength space} (or {\bf \LPS} hereafter). In an \LPS, $M^2_\ll$ is open.
%Kunzinger \& S\"amann 
They say $(M,d,\ell)$ is
{\em causally closed} iff $M^2_\le$ is closed (and {\em locally causally closed} if each $(x,x) \in M^2$ admits a neighbourhood $U \times U$ in which $M_\le^2$ is closed).  Like them, we reserve the term {\bf causal {\bf curve}} for a  {\em nonconstant} causal path which is locally Lipschitz continuous with respect to $d$,
and {\em timelike curve} for a causal curve which is also a timelike path.
They use the term {\em rectifiable} (hereafter {\em $\ell$-rectifiable}) for a causal curve $\sigma$ with nonzero proper time elapsing along each subsegment: i.e.,
$L_\ell(\tilde \sigma)<0$ for the restriction $\tilde \sigma$ of $\sigma$ to each non-degenerate interval $[a,b] \subset A$.
We call a causal curve $\tilde \sigma$ {\bf $\ell$-minimizing} if it minimizes $L_\ell(\sigma)$ among all causal curves which share its endpoints. 
%$\sigma(0)$ and $\sigma(1))$.
The metric length $L_d(\sigma)$ of any path $\sigma:[a,b] \longrightarrow X$ is defined analogously to  \eqref{Lorentzian length} but with $d$ in place of $\ell$, and 
%say $\sigma$ has locally {\bf bounded variation}, written $\sigma \in BV_{loc}(A; (M,d))$,
the path is said to be %BV$_{loc}$ in $(M,d)$
{\bf $d$-rectifiable} %\cite{AmbrosioTilli04} %(or simply recti fiable) 
if $L_d(\tilde \sigma)<\infty$
for the restriction $\tilde \sigma$ of $\sigma$ to each compact interval $[a,b] \subset A$; 
%similarly, we say $\sigma$ is {\bf bounded variation} or BV in $(M,d)$ if $L_d(\sigma)<\infty$.
or equivalently, if $\sigma$ is $BV_{loc}$ in $(M,d)$.
%we drop the modifier {\em locally} and the corresponding subscript if $L_d(\sigma)<\infty$.
%(some authors may prefer the term {\em locally bounded $d$-variation} to the term {\em $d$-rectifiable}).  
%\marginpar{6.26 BV in $(M,d)$? Ambrosio-Kircheim?}
Apart from countably many points, any $d$-rectifiable 
%BV 
path $\sigma$ in $(M,d)$ can be arclength reparametrized monotonically as a $1$-Lipschitz map 
$\tilde \sigma:B \longrightarrow M$  defined on a certain set $B\subset \R$.  Fixing $s \in A$, the countably many omitted points correspond to jump discontinuities of the monotone functions $t \in [s,\infty) \mapsto L_d(\sigma|_{A \cap [s,t]})$ and 
$t \in (-\infty,s] \mapsto L_d(\sigma|_{A \cap [t,s]})$. %, where $\sigma|_J$ denotes the restriction of $\sigma$ to $J\subset I$.  
The absence of such jumps is equivalent to the continuity of the path $\sigma$ along with connectedness of the domain $B$ of $\tilde \sigma$.
%a Lipschitz map 
%defined on a connected set $K$.

Similarly to \cite{KunzingerSaemann18} % \cite{CavallettiMondino20+}
we call a metric spacetime 
$(M,d,\ell)$   
{\em non-totally imprisoning} if %(i) $\ell^{-1}(-\infty)=\emptyset$ and 
(i) each compact subset %$K\subset X$ 
enjoys a uniform bound %$C_K$ 
on the $d$-lengths 
of the causal curves it contains;  we call it {\bf globally hyperbolic} 
if, in addition to being non-totally imprisoning,  (ii) each {\bf causal  diamond} $J(x,y) := J^+(x) \cap J^-(y)$ is compact;
we call it {\bf $\cK$-globally hyperbolic} 
if, in addition to being non-totally imprisoning,  (iii) $J(X,Y) := J^+(X) \cap J^-(Y)$ is compact for each compact $X,Y \subset M$,
where $J^\pm(X) = \cup_{x \in X} J^\pm(x)$.
In the context of the Lorentzian length spaces recalled below,   Burtscher and Garcia-Heveling have shown 
global hyperbolicity becomes equivalent to the existence of a Cauchy time function
or surface~\cite{BurtscherGarcia-Heveling21+}.
We deviate from Kunzinger \& S\"amann's terminology by saying $(M,d,\ell)$ is {\bf causally curve-connected} if $x \le y$ with $x\ne y$ implies the existence of a 
causal curve from $x$ to $y$, and {\bf timelike curve-connected} if $x \ll y$ implies the existence of a timelike curve from $x$ to $y$.

\begin{lemma}[On $d$-rectifiability of causal paths]
\label{L:rectifiability of causal paths}
If a causally curve-connected space $(M,d,\ell)$ is $\cK$-globally hyperbolic,
then each compact subset $X\subset M$ admits a uniform bound on the $d$-length of all (rough) causal paths 
$\sigma:[0,1] \longrightarrow X$ it contains. % (including rough causal paths). 
% that are neither assumed continuous  nor $d$-recti fiable.
\end{lemma}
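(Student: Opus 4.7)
The plan is to dominate the $d$-length of a rough causal path by that of a genuine causal curve sharing its endpoints, and then to invoke the non-total imprisonment axiom on a compact set supplied by $\cK$-global hyperbolicity. Specifically, for compact $X\subset M$ the causal diamond $J(X,X) = J^+(X)\cap J^-(X)$ is compact by $\cK$-global hyperbolicity, so non-total imprisonment furnishes a constant $C = C(X)$ bounding the $d$-length of every causal curve contained in $J(X,X)$.

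First I would fix a rough causal path $\sigma:[0,1]\to X$ (which may be assumed nonconstant) and pick an arbitrary partition $0=s_0<s_1<\cdots<s_N=1$. The defining inequality of a causal path gives $\sigma(s_{i-1})\le \sigma(s_i)$, so for each $i$ with $\sigma(s_{i-1})\ne\sigma(s_i)$ the causally curve-connected hypothesis supplies a causal curve $\gamma_i$ from $\sigma(s_{i-1})$ to $\sigma(s_i)$; since both endpoints lie in $X$, the inclusion $\gamma_i\subseteq J^+(\sigma(s_{i-1}))\cap J^-(\sigma(s_i))\subseteq J(X,X)$ is automatic. Concatenating the $\gamma_i$ (and skipping degenerate pairs) yields a map $\gamma$ from $\sigma(0)$ to $\sigma(1)$, which is locally Lipschitz as a finite concatenation of locally Lipschitz pieces, and causal because for parameters lying in pieces with indices $i\le j$ one chains
\[
\gamma(s) \;\le\; \sigma(s_i) \;\le\; \sigma(s_{j-1}) \;\le\; \gamma(t),
\]
using causality within each piece and the causal monotonicity of $\sigma$ itself. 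Thus $\gamma$ is a causal curve contained in the compact set $J(X,X)$.

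Second, non-total imprisonment forces $L_d(\gamma) \le C$. Combining this with the triangle inequality for $d$ and the additivity of $d$-length under concatenation yields
\[
\sum_{i=1}^N d(\sigma(s_{i-1}),\sigma(s_i)) \;\le\; \sum_i L_d(\gamma_i) \;=\; L_d(\gamma) \;\le\; C,
\]
and taking the supremum over partitions delivers $L_d(\sigma) \le C$, uniformly in $\sigma$.

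The main obstacle I anticipate is verifying that the piecewise object $\gamma$ qualifies as a causal curve in the sense of Kunzinger--S\"amann (locally Lipschitz and causal as a path) rather than merely another rough causal path; this is automatic for a finite concatenation, provided one threads the causal inequalities across joining points by invoking $\sigma$'s own monotonicity. Once this bridge from rough causal paths to genuine causal curves is in place, $\cK$-global hyperbolicity and non-total imprisonment together deliver the uniform bound with no further work.
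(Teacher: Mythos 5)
Your proof is correct and follows essentially the same route as the paper's: use $\cK$-global hyperbolicity to compactify the problem inside $J(X,X)$, approximate the rough path by a concatenation of genuine causal curves supplied by causal curve-connectedness, and appeal to non-total imprisonment. The only difference is cosmetic — you argue directly (bound an arbitrary partition sum by $C$, then take the supremum), whereas the paper argues by contradiction from a partition exceeding $C_Y$ — and you spell out the verification that the concatenation is causal, which the paper leaves implicit.
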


\begin{proof}
Since $\cK$-global hyperbolicity yields compactness of $Y= J(X,X)$ from that of $X \subset M$,  non-total imprisonment provides a bound
$C_Y$ on the lengths of (continuous, $d$-rectifiable) causal curves in $Y$.  To derive a contradiction, suppose
$L_d(\sigma)>C_Y$ for some % (not-necessarily continuous and/or $d$-recti fiable) 
rough causal path $\sigma:[0,1]\longrightarrow X$. As in \eqref{Lorentzian length} (but with $d$ replacing $\ell$), there is then a partition 
$0\le s_0 < s_1 <\cdots < s_N \le 1$ for which
$$
\sum_{i=1}^N d(\sigma(s_{i-1}),\sigma(s_i)) > C_Y.
$$
It costs no generality to assume $\sigma(s_{i-1})\ne \sigma(s_i)$ for each $i \in \{1,\ldots, N\}$.
Causal curve-connectedness then yields a ($d$-Lipschitz) causal curve $\tilde \sigma_i$ connecting $\sigma(s_{i-1})$ to $\sigma(s_i)$. This segment lies in $Y$, since its endpoints lie in $X$.  As in \eqref{Lorentzian length},
$L_d(\tilde \sigma_i) \ge d(\sigma(s_{i-1}),d(\sigma(s_i))$. Concatenating these segments therefore yields a ($d$-Lipschitz) causal curve
of length $L(\tilde \sigma)>C_Y$ --- the desired contradiction.
\end{proof}

\begin{remark}[$d$-arclength parameterization of causal paths]\label{R:arclength}
Under the stated hypotheses, Lemma \ref{L:rectifiability of causal paths} asserts any causal path is $d$-rectifiable, hence admits at most countably many discontinuities (and a 1-Lipschitz reparameterization, if continuous).
\end{remark}

A causal curve $\sigma:[0,1]\longrightarrow M$ will be called an {\bf $\ell$-curve} if $L_\ell(\sigma)=\ell(\sigma(0),\sigma(1))$;
by the triangle inequality,  any $\ell$-curve  is $\ell$-minimizing. % $L_\ell$ among curves which share its endpoints.
{An  {\bf $M^2_\ll$-geodesic space} will refer to a metric spacetime in which
each pair $(x,y) \in M^2_\ll$ is linked by an $\ell$-curve.} % causal curve satisfying $L_\ell(\sigma)=\ell(\sigma(0),\sigma(1))$.
Motivated by 3.16--3.18 of \cite{KunzingerSaemann18} we define:

\begin{definition}[Regular Lorentzian geodesic space]
A  {\bf Lorentzian geodesic space} refers to a \LPS\  %(or LPPS, the distinction will be moot here) 
in which each pair $(x,y) \in M^2_\le$ with $x \ne y$ is linked by an $\ell$-curve. 
%Motivated by 3.16--3.18 of \cite{KunzingerSaemann18}, 
A metric spacetime will be called {\bf regular} if  
%$L_\ell(\sigma)=\ell(\sigma(0),\sigma(1))<0$ implies an $\ell$-curve 
no $\ell$-minimizing curve 
$\sigma$ with $\ell(\sigma(0),\sigma(1))<0$ contains nonconstant lightlike subsegments.  
After reparameterization to eliminate any intervals of constancy,  such $\ell$-minimizing curves become timelike.
\end{definition}
%(apart from possible intervals of constancy).
% for which $L_\ell(\sigma) = \ell(x,y)$

We can now address the continuity of $\ell$-paths:

%\marginpar{%0.4? 
%7.56 $\to$ 7.42 
%9.30 N setting is smooth
%12.17 N $\cP(M) \ne \emptyset$ 
%13.13?  
%20.27 N (g) of D:cgs needs $d$ complete 
%19.17 N Theorem 8.5 applies to $N=n$ in the published but not the arXiv version;  
%19.12 (iii) assumes $N\ge n$
%20.27 $\tilde g$ complete require for a causal geodesic space}

\begin{lemma}[Continuity %ous recti fiability 
of $\ell$-paths] 
\label{L:continuity of geodesic paths}
Let $(M,d,\ell)$ be a Lorentzian geodesic space in which $\ell_-$ is continuous and all causal diamonds $J(x,y)$ are compact.
Assume no $\ell$-curve $\sigma$ with $\ell(\sigma(0),\sigma(1))<0$ has nonconstant null subsegments.
Then each $\ell$-path $\sigma$ in $M$  is continuous; c.f. \eqref{Geol}. %and $d$-recti fiable, so 
%$\TPl(X) \subset C([0,1],X)$. %= \TGeol_d(X)$.
\end{lemma}

\begin{proof}  In a regular Lorentzian geodesic space with $\ell_-$ continuous and compact diamonds, 
suppose $\ell(x(s),x(t)) = (t-s) T$ for all $0\le s<t\le 1$, where $T=\ell(x(0),x(1))<0$.

We show left continuity of $x(\cdot)$ at $0< s \le1$. %  fix $r>0$ sufficiently small.
Recall $J(x(0),x(s))$ is compact,  while for $t \in (0,s)$ chronology of the path 
implies $x(t) \in I(x(0),x(s)) \subset J(x(0),x(s))$.
As $t$ approaches $s$ from below, $x(t)$ therefore tends to a subsequential limit $p^-$ in the compact set $J(x(0),x(s))$. 
If $p^-=x(s)$ the left continuity of $x(\cdot)$ at $s$ is established, so we assume $p^- \ne x(s)$. 
The continuity of $\ell_-$  implies $\ell(x(0),p^-) =  sT=\ell(x(0),x(s))<0$
(and similarly $\ell(p^-,x(s))\ge 0$ --- which becomes an equality since $p^- \in J(x(0),x(s))$ was noted above).  Concatenating the $\ell$-curves 
joining $x(0)$ to $p^-$ and $p^-$ to $x(s)$ 
yields a causal curve
$\tilde \sigma$ from $x(0)$ to $x(s)$ with Lorentzian length $\ell(x(0),p^-) + \ell(p^-,x(s)) \le sT = \ell(x(0),x(s))$.  
This contradicts the triangle inequality unless $\ell(p^-,x(s))=0$ and $\tilde \sigma$ is an $\ell$-curve with  
timelike separated endpoints
that contains a null segment.  Since the latter would contradict the regularity assumed of $(X,d,\ell)$, we conclude 
the left continuity $p^-=x(s)$ must hold.

A similar argument shows right continuity of $x(\cdot)$ at $s \in [0,1)$ to establish the continuity of $x(\cdot)$.
\end{proof}

Combining Lemmas \ref{L:rectifiability of causal paths} and \ref{L:continuity of geodesic paths} (which provide finite $d$-length and continuity respectively) yields
a corollary whose conclusion (i) seems important for the 
constructions proposed by Cavalletti \& Mondino~\cite{CavallettiMondino20+} \cite{CavallettiMondino22},
and has also been exploited (and credited to us) by Braun \cite{Braun22.5+} \cite{Braun22.6+} in his study of timelike geodesics and curvature dimension conditions.  Its converse (ii) follows from regularity (which implies the $\ell$-minimizers in question are $\ell$-rectifiable, hence timelike) as in Corollary 3.35 of \cite{KunzingerSaemann18}.

\begin{corollary}[Relation of $\ell$-paths to $\ell$-curves and $\ell$-minimizers]
\label{C:geodesics v extremizers}
If $\ell_-$ is continuous on a $\cK$-globally hyperbolic regular Lorentzian geodesic space, % has compact causal diamonds and 
then:
%globally hyperbolic regularly localizable 
%Lorentzian length space: 
\\  (i) Every $\ell$-path becomes an $\ell$-curve (hence $\ell$-minimizing)
after a continuous increasing (not necessarily Lipschitz) reparameterization.  
\\ (ii) Conversely, any $\ell$-minimizing curve with timelike
separated endpoints becomes an $\ell$-path after a similar reparameterization.
\end{corollary}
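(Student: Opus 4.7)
The plan is to obtain (i) by $d$-arclength reparametrization of the $\ell$-path, after establishing its continuity and $d$-rectifiability, and to obtain (ii) by $\ell$-length reparametrization of the $\ell$-minimizer, after using regularity to eliminate constant subsegments.

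For (i), I would take $\sigma:[0,1]\to M$ an $\ell$-path with $T:=\ell(\sigma(0),\sigma(1))\in(-\infty,0)$. First, on any partition of $[0,1]$ the defining identity \eqref{Geol} makes the sum telescope to $T$, so $L_\ell(\sigma)=T$; and since $\ell(\sigma(s),\sigma(t))=(t-s)T\ne 0$ for $s<t$, the path $\sigma$ is injective. Second, Lemma \ref{L:continuity of geodesic paths} applies and yields continuity of $\sigma$. Third, the image of $\sigma$ lies in the compact causal diamond $J(\sigma(0),\sigma(1))$, so Lemma \ref{L:rectifiability of causal paths} shows $\sigma$ is $d$-rectifiable. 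Fourth, the $d$-arclength function $\alpha(s):=L_d(\sigma|_{[0,s]})$ is continuous (by continuity and rectifiability of $\sigma$) and strictly increasing (by injectivity, since $\alpha(t)-\alpha(s)\ge d(\sigma(s),\sigma(t))>0$ for $s<t$), hence a homeomorphism $[0,1]\to[0,L_d(\sigma)]$. Its inverse, composed with an affine rescaling, then provides a continuous strictly increasing reparametrization $\phi:[0,1]\to[0,1]$ making $\tilde\sigma:=\sigma\circ\phi$ Lipschitz with respect to $d$. Since $L_\ell$ is manifestly invariant under monotone reparametrization (it is a supremum over partitions), one concludes $L_\ell(\tilde\sigma)=L_\ell(\sigma)=T=\ell(\tilde\sigma(0),\tilde\sigma(1))$, so $\tilde\sigma$ is an $\ell$-curve, hence $\ell$-minimizing by the triangle inequality.

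For (ii), I would take $\sigma:[0,1]\to M$ to be $\ell$-minimizing with $x=\sigma(0)\ll y=\sigma(1)$. Since the space is a Lorentzian geodesic space, some $\ell$-curve joins $x$ to $y$; minimizing $L_\ell$ among causal curves with these endpoints therefore forces $L_\ell(\sigma)=\ell(x,y)=:T<0$, so $\sigma$ itself is an $\ell$-curve. A cut-and-paste argument using additivity of $L_\ell$ and the triangle inequality then shows every subsegment of $\sigma$ is $\ell$-minimizing and that $\ell$ is additive along $\sigma$: $\ell(x,\sigma(t))=\ell(x,\sigma(s))+\ell(\sigma(s),\sigma(t))$ for $0\le s<t\le 1$. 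Regularity excludes nonconstant lightlike subsegments, so after collapsing the at most countably many maximal constant subintervals $\sigma$ becomes timelike; the function $f(s):=\ell(x,\sigma(s))$ is then continuous (by continuity of $\ell_-$) and strictly decreasing from $0$ to $T$, and $\phi(r):=f^{-1}(rT)$ yields a continuous strictly increasing map $[0,1]\to[0,1]$ for which additivity gives $\ell(\tilde\sigma(s),\tilde\sigma(t))=(t-s)T$, making $\tilde\sigma:=\sigma\circ\phi$ an $\ell$-path. This is essentially Corollary 3.35 of \cite{KunzingerSaemann18}, which the excerpt already cites as the source of the converse.

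The main obstacle, as I anticipate it, lies in the bookkeeping between $d$-length and $\ell$-length parametrizations and in ensuring each reparametrization is a genuine homeomorphism. In (i) the injectivity of the $\ell$-path is precisely what upgrades $d$-arclength from weakly to strictly increasing, while in (ii) regularity is exactly what is needed to kill the constant subsegments that the definition of a causal curve tolerates but that of an $\ell$-path forbids, so that $\ell$-length can genuinely serve as a parameter on what remains. The duality between the two halves of the corollary is then visible in the fact that $L_\ell$ is invariant under monotone reparametrizations while $L_d$ provides the one natural parametrization that restores $d$-Lipschitz regularity.
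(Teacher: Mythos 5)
Your proof is correct and follows essentially the same route as the paper, which leaves Corollary \ref{C:geodesics v extremizers} without an explicit argument and instead invokes Lemmas \ref{L:rectifiability of causal paths} and \ref{L:continuity of geodesic paths} for part (i) and Kunzinger--S\"amann's Corollary 3.35 together with regularity for part (ii). You simply make those reductions explicit: injectivity of the $\ell$-path (from $\ell(\sigma(s),\sigma(t))=(t-s)T\neq 0$) upgrades the $d$-arclength function to a strictly increasing homeomorphism for (i), and for (ii) you unfold the additivity and cut-and-paste argument behind KS Corollary 3.35, invoking regularity to exclude nonconstant null subsegments and collapsing constant ones. One small inaccuracy in your closing summary paragraph (not in the main argument, which handles it correctly): regularity does not itself ``kill the constant subsegments''; it forces every null subsegment to be constant, and the constant subsegments are then removed by the separate collapsing step.
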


%\begin{proof}
%The image $x([0,1])$ is therefore compact,  and $d$-recti fiability follows from Lemmas~\ref{L:recti fiability of causal paths},
%\ref{L:gh+r LLS}(i) and Remark \ref{R:CM20+ L1.5}.
%\end{proof}

\begin{remark}[Generality]
If globally hyperbolic and regular,  the Lorentzian length spaces of Kunzinger \& S\"amann that we reintroduce in Lemma \ref{L:gh+r LLS} below
satisfy the hypotheses of Lemma \ref{L:continuity of geodesic paths} and Corollary  \ref{C:geodesics v extremizers}.  Those of the lemma % 
are more general since, as $\ell$-minimizers are assumed to exist,
the lemma doesn't require a limit curve theorem to hold;  in particular, we don't expect the hypotheses of Lemma \ref{L:continuity of geodesic paths}
to imply $M$ is non-totally imprisoning. 
%Corollary \ref{C:geodesics v extremizers}
Moreover, 
Lorentzian length spaces are modelled on open manifolds,
rather than manifolds-with-boundary:  they require $I^\pm(x) \ne \emptyset$ for each $x \in M$.  For this reason, %a compact set $X \subset M$ of 
a Lorentzian length space cannot be compact, whereas the lemma and its corollary apply equally well to compact and noncompact spaces.
%is not generally a Lorentzian length space itself.
\end{remark}

For the spaces of Corollary \ref{C:geodesics v extremizers},  
we have shown that  (a) $\ell$-paths, (b) $\ell$-curves, and (c) $\ell$-minimizers with timelike separated endpoints differ from each other only by continuous reparameterization.
The advantages of (a) relative to (b) and (c) are that (i) proper-time parameterization plays a crucial role in formulating synthetic timelike sectional \cite{AlexanderBishop08} \cite{KunzingerSaemann18} and Ricci \cite{McCann20} \cite{MondinoSuhr23} \cite{CavallettiMondino20+} curvature bounds; and (ii) it depends only on $\ell$, not on $d$ nor the topology it metrizes.  
%Indeed, in the setting of Corollary \ref{C:geodesics v extremizers}, $\ell$-curves are independent of $d$ 
%except possibly in their parameterization.  
As we shall see in Example \ref{E:discontinuous timelike branching}, not all metric spacetimes of interest display the equivalence of (a) (b) and (c); 
noting \cite{Braun22.6+} we therefore define:

\begin{definition}[$\ell$-geodesic]
\label{D:geodesic}
In a metric spacetime $(M,\ell,d)$,  an $\ell$-geodesic will refer to an $\ell$-path which is $d$-continuous. % (and $d$-recti fiable?).
\end{definition}

Clearly this definition does not depend on $d$,  except through the topology $d$ induces.
While (b) and (c) also make sense for curves with lightlike separated endpoints, we can extend Definition \ref{D:geodesic} to such paths by first setting
\begin{align}
\TGeol &= \TGeol(M) :=\{\sigma \in C([0,1], M) \mid  \eqref{Geol}\ {\rm holds} \}
\end{align}
and then defining the {\em causal $\ell$-geodesics} $\AGeol(M)$ to be the closure of $\TGeol(M)$ in the space $C([0,1];M)$ of continuous paths metrized by
\begin{equation}\label{uniform metric}
d^\infty(\sigma,\tilde \sigma) := \sup_{s \in [0,1]}d(\sigma(s),\tilde \sigma(s)).
\end{equation}
The continuity of $\ell_-$ required by Lemma \ref{L:continuity of geodesic paths} implies $\sigma \in \AGeol(M)$ satisfies 
$$
\ell(\sigma(s),\sigma(t))=(t-s)\ell(\sigma(0),\sigma(1)) \in \R \qquad \forall 0\le s<t \le 1.
$$
Thus $\sigma \in \AGeol(M)$ is a timelike $\ell$-path unless it is constant or null;  in the latter case we call $\sigma$ a {\em lightlike $\ell$-geodesic},
reserving the unmodified term $\ell$-geodesic for timelike $\ell$-geodesics.

\begin{remark}[Affinely parameterized null paths and role of $d$]\label{R:AGeol}
Like $\TGeol(M)$, the set $\AGeol(M)$  depends on $d$ only through the topology it defines
(since the topology induced by $d^\infty$ on $C([0,1],M)$ depends only on the topology of $M$ induced by $d$).
For the globally hyperbolic Lorentzian length spaces introduced below,  
this topology is uniquely determined by $\ell$, as described in Remark \ref{R:topologies}.
For smooth globally hyperbolic spacetimes (i.e. Lorentzian manifolds satisfying Definition \ref{D:smooth spacetime}), apart from constant curves 
the set $\AGeol(M)$ consists precisely of those
causal geodesics affinely parameterized over $[0,1]$ whose relative interiors contain neither conjugate nor cut points \cite{BeemEhrlichEasley96}.
This justifies the interpretation of lightlike $\ell$-geodesics as the nonsmooth analogs of affinely parameterized smooth null geodesics 
 --- a notion which seems to have remained absent from the nonsmooth literature until now.
\end{remark}

%\marginpar{Check!}

Heuristically,  Kunzinger \& S\"amann's Lorentzian length spaces are metric spacetimes $(M,d,\ell)$ in which each time-separation $\ell(x,y)$ 
is a {\em sharp} lower bound for the Lorentzian length of all causal curves starting from $x$ and ending at $y$; (it is obviously a lower bound by
the triangle inequality).
In this setting, they proved a nonsmooth Avez-Seifert theorem,
asserting that any globally hyperbolic Lorentzian length space is a Lorentzian geodesic space.
In the absence of regularity or 
global hyperbolicity,  the technical definition of a 
{\bf Lorentzian length space} (hereafter {\bf LLS}) becomes cumbersome.  We can instead characterize globally 
hyperbolic Lorentzian length spaces by the following lemma; in its statement the chronological relation $\ll$ plays 
a more conspicuous role than the causal relation $\le$ emphasized in prior formulations.
Note we retain our signed time-separation function $\ell$ 
(which induces both relations $\le$ and $\ll$) instead of the more cumbersome notation $(M,d,\le,\ll,\ell_-)$ of~\cite{KunzingerSaemann18}.

\begin{lemma}[Globally hyperbolic Lorentzian length spaces]
\label{L:gh+r LLS}
Let $(M,d,\ell)$ be a globally hyperbolic metric spacetime.  
Then $(M,d,\ell)$ is a Lorentzian length space if and only if 
(i) it is a timelike curve-connected and
(ii)  $M^2_\ll$-geodesic space, 
%(ii) in which there is a timelike causal curve connecting any $x \ll y$,
in which (iii) $I^\pm(x)$ is non-empty for all $x \in M$, 
(iv) $\ell^{-1}(\infty)$ is open and 
(v) $\ell_- := \max\{-\ell,0\}$ is continuous and real-valued on $M^2$.
\end{lemma}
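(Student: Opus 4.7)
My strategy is to decompose Kunzinger--S\"amann's definition of a Lorentzian length space and match its constituent parts to the hypotheses (i)--(v), exploiting compactness of causal diamonds throughout.

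For the implication from (i)--(v) to LLS, the Lorentzian prelength space structure is largely automatic: the triangle inequality and the causal/chronological relations with push-up are built into the preamble's time-separation axioms, while condition (v) supplies (indeed strengthens) the lower semicontinuity of $\ell_-$. What remains to verify is the length-realization property $\ell(x,y) = \sup_\sigma L_\ell(\sigma)$, where the sup runs over causal curves from $x$ to $y$, together with the localizability axiom. For length-realization, when $x \ll y$ the $\ell$-curve of (ii) attains the supremum, while for $x \le y$ with $x \not\ll y$ the causal curve produced by (i) satisfies $L_\ell(\sigma) \in [\ell(x,y),0] = \{0\}$ by the triangle inequality. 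For localizability at a point $p$, I use (iii) to choose $q \ll p \ll r$ and take $U$ to be an appropriate sub-neighborhood of the compact diamond $J(q,r)$; Lemma~\ref{L:rectifiability of causal paths} bounds the $d$-lengths of causal curves trapped in $U$, while (ii), (iv), and (v) provide the required local $\ell$-curves and regularity.

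For the reverse implication, assume $(M,d,\ell)$ is a globally hyperbolic LLS. The nonsmooth Avez--Seifert theorem of \cite{KunzingerSaemann18} supplies an $\ell$-curve between each pair $x \le y$ with $x \ne y$, establishing (i) (when restricted to the timelike case) and (ii) after proper-time reparameterization. Property (iii) follows from the chronological neighborhoods guaranteed by the LLS axioms; (iv) holds because globally hyperbolic spaces are causally closed, so $\ell^{-1}(\infty) = M^2 \setminus M_\le^2$ is open. For (v), lower semicontinuity comes from the LPS axioms, while upper semicontinuity and finiteness of $\ell_-$ in the globally hyperbolic regime follow from upper semicontinuity of $L_\ell$ along $C([0,1],M)$-convergent sequences of causal curves confined to compact diamonds, via Arzel\`a--Ascoli and Lemma~\ref{L:rectifiability of causal paths}.

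The main obstacle lies in verifying localizability in the $(\Leftarrow)$ direction, since the LLS axiom demands that time-separations computed intrinsically inside the chosen neighborhood $U$ coincide with the ambient $\ell$. Global hyperbolicity greatly eases this through the compact-diamond structure, but one must still argue that the $\ell$-curves furnished by (ii) can be made to remain within $U$, which requires a careful shrinking argument that exploits (iii)--(v) to rule out causal escape from $U$. All other axioms follow relatively directly from the stated hypotheses once the framework has been properly unfolded.
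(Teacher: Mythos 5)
Your overall decomposition matches the paper's, and you correctly locate the localizability verification as one of the delicate points. However, there is a genuine gap in your handling of causal curve-connectedness in the $(\Leftarrow)$ direction, i.e., showing (i)--(v) imply the LLS axioms.

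You write that ``for $x \le y$ with $x \not\ll y$ the causal curve produced by (i)'' realizes the length. But (i) is \emph{timelike} curve-connectedness: it only furnishes a curve when $x \ll y$. None of hypotheses (i)--(v) directly assert the existence of a causal curve between null-separated points $x < y$ with $\ell(x,y)=0$, yet the LLS definition requires causal curve-connectedness for all such pairs. This must be \emph{proved}, and it is in fact the most substantial step of this direction of the lemma. The paper handles it by pushing beyond $y$: use (iii) to pick $z \in I^+(y)$, take the $\ell$-curve $\sigma$ from $y$ to $z$ that (ii) provides, then for each $i$ take an $\ell$-curve $\sigma_i$ from $x$ to $\sigma(1/i)$ (again via (ii), since $x \ll \sigma(1/i)$). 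Global hyperbolicity traps all $\sigma_i$ in the compact diamond $J(x,z)$ and bounds their $d$-lengths; reparameterizing by $d$-arclength and applying the generalized Arzel\`a--Ascoli theorem yields a uniform subsequential limit $\tilde\sigma_\infty$ from $x$ to $y$, which is causal by (iv)--(v) and an $\ell$-curve since $0 \ge L_\ell(\tilde\sigma_\infty) \ge \ell(x,y)=0$. Your proposal as written does not contain this limiting construction, so the length-realization and causal curve-connectedness claims remain unestablished for null-related pairs.

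A smaller remark on localizability: once the causal connectedness step above is in place, the containment of $\ell$-curves in $U_x := I(x^-,x^+)$ follows essentially for free from the order relations $x^- \ll y \le \tilde\sigma_\infty(s) \le z \ll x^+$, rather than requiring the ``careful shrinking argument'' you anticipate. The $\Rightarrow$ direction of your plan is a correct unwinding of the LLS definition together with Kunzinger--S\"amann's Theorems 3.28--3.30 and the causal-closedness lemmas cited in the paper, so that half is fine modulo the vagueness about upper semicontinuity of $\ell_-$.
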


\begin{proof}
If a globally hyperbolic metric spacetime $(M,d,\ell)$ is a Lorentzian length space,
then it is a locally causally closed, causally and timelike curve-connected (i), 
localizable \LPS\  in which the bound $\ell(x,y)$ on the Lorentzian length of causal curves
from $x$ to $y$ is sharp \cite[Definitions 3.1, 3.16 and 3.22]{KunzingerSaemann18}.
Non-emptiness (iii) of $I^\pm(x)$ is a requirement of localizability.
%we require non-emptiness of $I^\pm(x) \setminus \{x\}$ to rule out topologically isolated points $x$ with $\ell(x,x) = -\infty$.
  Theorems 3.28-3.30 %and Corollary 3.35 
of the same reference %combine with regularity to 
then establish (ii) and (v).
 %BUT ONLY ASSUMING REGULARITY!
Lemmas 1.5-1.6 of \cite{CavallettiMondino20+} now combine to imply causal closedness (iv); c.f. Remark \ref{R:CM20+ L1.5}.

Conversely, if a globally hyperbolic metric space $(M,d,\ell)$ satisfies (i)--(v) it is a (i) timelike curve-connected (v) \LPS, and causally closed by (iv).
The triangle inequality shows the $\ell$-length of any causal curve connecting $x$ to $y$ dominates $\ell(x,y)$.
For $x \ll y$, (ii) asserts this bound is attained.  
The remaining requirements of an LLS are that $(M,d,\ell)$ be {causally curve-connected} and {\em localizable}.
Recall $x \ll y$ are connected by (i) a timelike curve and (ii) an $\ell$-curve.  
We combine (ii)--(v) with global hyperbolicity to argue any distinct $x \le y$ 
are also connected by an $\ell$-curve. 
Since $\ell(x,y)<0$ was handled above, we need only consider $\ell(x,y) =0$.
In this case there exists $z \in I^+(y)$ by (iii) and, since $\ell(y,z)<0$,  an $\ell$-curve $\sigma$ connecting $y$ to $z$ by (ii).
For each $i \in \bN$ let $\sigma_i$ denote the $\ell$-curve connecting $x$ to $\sigma(1/i)$.
Globaly hyperbolicity implies compactness of $J(x,z)$ and also yields a uniform bound $L_d(\sigma_i) < C$.
Letting $\tilde \sigma_i:[0,1]\longrightarrow X$ denote the reparameterization of $\sigma_i$ proportionally to its $d$-arclength,
the generalized Ascoli-Arzela theorem yields a uniform subsequential limiting curve $\tilde \sigma_\infty$
connecting $x$ to $y$ which has $d$-Lipschitz constant $C$ and, by (iv)--(v), is causal.
Since $L_\ell(\sigma) \ge \ell(x,y)=0$ for any causal path joining $x$ to $y$, 
while $0 \ge L_\ell(\tilde \sigma_\infty)$ from causality, it follows that $\tilde \sigma_\infty$ is an $\ell$-curve.

The localizing neighbourhoods and continuous time-separation 
functions required for {\em localizability} are constructed as follows:
given $x \in M$, (iii) yields $x^\pm \in I^\pm(x)$.  The diamond $U_x := I(x^-,x^+)$ is an open neighbourhood of $x$ by (v).
It is contained in the compact set $J(x^-,x^+)$,  hence admits a uniform bound on the $d$-length of all causal curves in $U_x$
by global hyperbolicity.  The restrictions $d',\ell'$ of $d,\ell$ to $U_x \times U_x$  make $(U_x,d',\ell')$ an \LPS\  by (iv)--(v) which also imply the
continuity of $\ell'_-$.  Any $y \in U_x$ admits timelike curves connecting it to $x^\pm$.
This implies $I^\pm(y)\cap U_x$ is nonempty.  The preceding paragraph also shows any $y < z$ in $U_x$ are connected by an $\ell$-curve $\tilde \sigma_\infty$
in $M$.  Since $x^- \ll y \le \tilde \sigma_\infty(s) \le z \ll x^+$ for all $s \in [0,1]$, this $\ell$-curve lies in $U_x$ as required.
\end{proof}

\begin{remark}[Causal closedness, time-separation semicontinuity]\label{R:ell semicontinuity}
Note causal closedness (iv) becomes equivalent to the lower semicontinuity of the time-separation function $\ell$, % required in an \LPS, 
given continuity (v) of $\ell_-$.
\end{remark}

\begin{remark}[$\cK$-global hyperbolicity]
\label{R:CM20+ L1.5}
Any globally hyperbolic Lorentzian length space is $\cK$-global hyperbolic by \cite[Lemma 1.5]{CavallettiMondino20+} \cite{Minguzzi23+},
and Lorentzian geodesic by \cite{KunzingerSaemann18} (or the proof of Lemma \ref{L:gh+r LLS}).  Thus every %continuous
continuous causal path admits a Lipschitz reparameterization by Remark \ref{R:arclength}. %a result of Lemma \ref{L:rectifiability of causal paths}.
\end{remark}

\begin{remark}[Regular Lorentzian geodesic spaces]
\label{C:gh+r LLS}
In a regular metric spacetime, Lemma \ref{L:gh+r LLS}(i) follows from (ii). Thus
a globally hyperbolic regular Lorentzian geodesic space in which $\ell$ is lower semicontinuous and $\ell_-$ is real continuous,
is an LLS if and only if (iii) of the same lemma holds (because Lorentzian geodesy implies (ii), 
while the semicontinuity properties of $\ell$ imply (iv)-(v) by Remark \ref{R:ell semicontinuity}).
Thus the main difference is that a smooth manifold with nonempty spatial boundary might be a globally hyperbolic regular Lorentzian geodesic space, but it cannot be an LLS.
\end{remark}

\begin{remark}[Equivalence of topologies and independence of $d$]\label{R:topologies}
Kunzinger \& S\"amann discuss three topologies on a metric spacetime: 
the {\em Alexandrov} (or {\em order}) topology, which is coarsest topology $\mathcal A$ containing the diamonds $\{I(x,y):= I^+(x) \cap I^-(y)\}_{x,y \in M}$, 
the {\em chronological} topology, which is the coarsest topology $\mathcal I$ containing the cones $\{I^\pm(x)\}_{x \in M}$,
and the metric topology $\mathcal D$. In an \LPS\  these are ordered
$\mathcal A \subset \mathcal I \subset \mathcal D$.  They call an \LPS\  {\em strongly causal} if all three topologies coincide.
Theorem 3.26 %(iii) and (v) 
of \cite{KunzingerSaemann18} shows that that a globally hyperbolic LLS is strongly causal, and  that a strongly causal LLS is 
non-totally imprisoning.  Since strong causality and compactness of causal diamonds are purely topological properties,  it follows that if $(M,d,\ell)$ and $(M,d',\ell)$ are Lorentzian length spaces in which the metric topologies
$\mathcal D$ and $\mathcal D'$ coincide,  then global hyperbolicity of one implies global hyperbolicity of the other.
%%showed any globally hyperbolic Lorentzian length space is {strongly causal}, meaning $d$ metrizes the {chronological} topology,  which they define as the coarsest topology containing the timelike past and futures $\{I^\pm(x)\}_{x \in X}$ of each point.  In this setting, the metric topology also agrees with the {Alexandrov} topology, defined as the coarsest topology containing all timelike diamonds $I(x,y) := I^+(x) \cap I^-(y)$. 
%Similarly, if we strengthen regularity to {\em chrono-regularity} by requiring $L_\ell(\sigma)=\ell(\sigma(0),\sigma(1))<0$ be free from nonconstant null segments for all rough causal paths $\sigma:[0,1]\longrightarrow M$ which are continuous in the chronological topology, then %Lemma \ref{L:geodesics v extremizers shows
%%Similarly,   
%either both globally hyperbolic Lorentzian length spaces $(M,d,\ell)$ and $(M,d',\ell)$ are chrono-regular, or neither is.
Similarly, in a $\cK$-globally hyperbolic Lorentzian geodesic space (or by Remark \ref{R:CM20+ L1.5} in a
 globally hyperbolic Lorentzian length space) Lemma \ref{L:rectifiability of causal paths} shows
%each merely continuous causal path $\sigma$ admits a $d$-arclength reparameterization (by
%Lemma \ref{L:recti fiability of causal paths} and Remark \ref{R:CM20+ L1.5}), so in this setting 
regularity becomes equivalent to the assertion that every continuous causal path $\sigma$ satisfying 
$L_\ell(\sigma)=\ell(\sigma(0),\sigma(1))<0$ is free from nonconstant null segments.
%This largely allows us to relax the Lipschitz requirement on continuous causal curves in many other definitions.
%, according to
%Corollary \ref{C:geodesics v extremizers}.
Thus we arrive at the notion of a globally hyperbolic (regular) Lorentzian length space which %, by Lemma \ref{L:gh+r LLS}(b), 
does not depend on $d$ at all,  requiring only metrizability of the chronological topology;  
compare \cite{Minguzzi23+} \cite{MinguzziSuhr22+} \cite{Mueller22+}. 
\end{remark}

\section{Synthetic timelike Ricci curvature bounds}
\label{S:Ricci}

We shall often be interested in metric spacetimes $(M,d,\ell)$ which are (a) regular (b) $\mathcal K$-globally hyperbolic (c) Lorentzian geodesic spaces having (d) $\ell$ lower semicontinuous and (e) $\ell_-=\max\{0,-\ell\}$ continuous real-valued.  Such spaces differ from globally hyperbolic regular 
LLS's only in that
$I^\pm(x)$ can be empty for some $x \in M$, thus encompassing some manifolds with spatial boundary and facilitating heredity of (a)-(e) by compact subsets.
We shall also assume %(f) bounded sets are compact.
the topology is (f) separable and (g) complete.

\begin{definition}[Causal geodesic space]
\label{D:causal geodesic space}
Metric spacetimes satisfying (a)-(g) above will be called {\bf causal geodesic spaces} for brevity,
or {\bf proper} causal geodesic spaces if (h) boundedly compact.
\end{definition}

They seem a natural setting for Cavalletti \& Mondino's 
formulation of synthetic timelike lower Ricci curvature bounds, which we now recall along with
refinements due to Braun.  A slightly more restrictive alternative would be to work on closed Lorentzian geodesic subsets of 
(a') regular (b') globally hyperbolic (c') Polish Lorentzian length spaces,  in which case Remark~\ref{R:topologies} shows we can discard $d$ entirely and assume only complete separable metrizability of the topology generated by $\{I^\pm(x)\}_{x \in M}$. %\mathcal I$.

\subsection{Lifting the geometry from events to fuzzy events}

%\marginpar{Check! $C_b$}

Given a metric space $(M,d)$, let $\cP(M)$ denote the set of Borel probability measures on such a space, and $\cP_c(M) \subset \cP(M)$ the subset of measures with compact support.  We can think of $\mu \in \cP(M)$ as representing a fuzzy point (or in the metric spacetime setting, a fuzzy event).
 Given a Borel map $G:M^- \longrightarrow M^+$ between two metric spaces
 $(M^\pm,d^\pm)$ and $\mu^- \in \cP(M^-)$,  we denote by $\mu^+ = G_\#\mu^- \in \cP(M^+)$ the measure defined by
 $\mu^+(B) = \mu^-(G^{-1}(B))$ for all $B \subset M^+$.  Letting $\pi^\mp(x^-,x^+) = x^\mp$ 
 denote the projection from $M^- \times M^+$
 onto its left and right factors,  we define 
 $$\Gamma(\mu^-,\mu^+) = \{ \gamma \in \cP(M^- \times M^+) \mid \pi^\pm_\# \gamma = \mu^\pm\}.$$
 Given $p \in [1,\infty)$,  the {\em $p$-Kantorovich-Rubinstein-Wasserstein distance} $d_p$ between $\mu^\pm \in \cP(M)$ defined by
 \begin{equation}\label{d_p}
 d_p(\mu^-,\mu^+) := \inf_{\gamma \in \Gamma(\mu^+,\mu^-)} \left(\int_{M^2} d(x,y)^p d\gamma(x,y) \right)^{1/p} 
 \end{equation}
 is well-known to be a metric on $\cP_c(M)$ provided $(M,d)$ is Polish (i.e. complete and separable), as we henceforth assume.
In this case the infimum defining $d_p$ is attained.  The completion of $\cP_c(M)$ with respect to $d_p$ consists of the measures
$\cP_p(M) \subset \cP(M)$ having moments up to order $p$,  and $d_p$ is well-known to metrize {\bf narrow} convergence 
(against continuous bounded test functions) plus convergence of these moments  \cite{Villani03}.  Moreover, $(\cP_p(M),d_p)$ is Polish \cite{AmbrosioGigli13o}.

In a causal geodesic space,  we set 

\begin{align*}
\Gamma_\le(\mu,\nu) &:= \{\gamma \in \Gamma(\mu,\nu) \mid \gamma[M^2_\le]=1 \}
\\ \Gamma_\ll(\mu,\nu) &:= \{\gamma \in \Gamma(\mu,\nu) \mid \gamma[M^2_\ll]=1 \}
\end{align*}
for $\mu,\nu \in \cP(M)$.  A measure $\gamma \in \cP(M^2)$ is called {\bf causal} if $\gamma[M^2_\le]=1$ and {\bf timelike} if $\gamma[M^2_\ll]=1$.
 Given $q \in (0,1]$,  we define
 \begin{equation}\label{l_q}
\ell_q(\mu,\nu) := \inf_{\gamma \in \Gamma_\le(\mu,\nu)} -\left(\int_{X^2} |\ell(x,y)|^q d\gamma(x,y) \right)^{1/q} 
 \end{equation}
 as in \cite{EcksteinMiller17} \cite{McCann20} \cite{CavallettiMondino20+};
 apart from the causality restriction $\gamma[M^2_\le]=1$, this is analogous to \eqref{d_p}.
  Although $\mu \otimes \nu \in \Gamma(\mu,\nu)$, it is possible that $\Gamma_\le (\mu,\nu)=\emptyset$;
 when this is the case we set $\ell_q(\mu,\nu)=+\infty$.  
 In \cite{EcksteinMiller17} \cite{McCann20} \cite{CavallettiMondino20+} it is shown that $\ell_q$ satisfies
 the triangle inequality \eqref{causal triangle inequality}.  
 When $\mu,\nu \in \cP_c(M)$ have compact support and $\Gamma_\le(\mu,\nu) \ne \emptyset$, from the continuity of $\ell_-$ we find
 $\ell_q(\mu,\nu) \in (-\infty,0]$ and the supremum \eqref{l_q} is attained; for attainment we use the fact that $M^2_\le$ is closed (by Remark~\ref{R:ell semicontinuity} and (c)-(d) of Definition \ref{D:causal geodesic space})
 to ensure $\Gamma_\le(\mu^-,\mu^+)$ is compact in the narrow topology (of convergence against continuous bounded test functions).  It is then easy to see that $\ell_q$ is a time-separation function on $\cP_c(M)$,
 and the induced causality relations on $\cP_c(M)$ are independent of $q \in (0,1]$.  Moreover,  $(\cP_c(M),\ell_q)$ is a timelike $\ell_q$-path space.
By contrast, the next remark shows $(\cP_c(M),d_p,\ell_q)$ cannot be an LPLS  for any $0<q\le 1\le p <\infty$.
  
 \begin{remark}[The $q$-time separation is not semicontinuous]
{Neither $\ell_q$ nor $\min\{\ell_q,0\}$ can be $d_p$-upper semicontinuous:
 when $\ell_q(\mu_\infty,\nu)<0$,  it is easy to construct narrow limits $\mu_j \to \mu_\infty$ 
 for which $\ell_q(\mu_j,\nu)= \infty$ by making small perturbations of $\mu_\infty$ that violate causality by being located in the future rather than the past of $\nu$.}
 \end{remark}
 
  It is also easy to construct examples of $\ell_q$-paths which are neither $d_p$-continuous nor timelike nonbranching;
 the following example helps motivate Definitions \ref{D:geodesic} and \ref{D:dualizability}.
 %,   and %also %that it might be moreover
 %the desirability of using $\ell$-geodesics
 %rather than $\ell$-paths to define timelike nonbranching:
  
%  \marginpar{Which hypothesis of L3 fails here? only topology? LPLS?}
  
 \begin{example}[Discontinuous, timelike branching $\ell_q$-paths]
 \label{E:discontinuous timelike branching}
 Fix the standard coordinates on two-dimensional Minkowski space $M=\R^2$.  Let $\ell(x,y) = -|g(y-x,y-x)|^{1/2}$ denote 
 the standard time-separation function (extended as $+\infty$ unless $y$ is in the future of $x$) and $d(x,y) = |x-y|$ the Euclidean distance in the chosen coordinates.
 Then $x(t) = (t,5)$ and $y(t) =(t,t)$ are timelike and lightlike geodesics respectively,   with $\ell(x(s),y(t)) = +\infty$ for all $s,t \in [0,1]$.
 Given $0<r<1$ set 
 $$
 z(t) :=
  \begin{cases} y(0) & {\rm if}\ t\in [0,r), 
 \\ y(1) & {\rm if}\ t \in [r,1].
 \end{cases}
 $$
  The measure $\mu(s) = \frac12 [\delta_{x(s)} + \delta_{z(s)}]$ is an $\ell_q$-path which fails to be weakly continuous.
  %Lemma \ref{L:continuity of geodesic paths} therefore shows the metric spacetime $(\cP_c(M),d_p,\ell_q)$ cannot be a causal geodesic space
  %(
  %For $0<q\le 1\le p <\infty$,  
  Varying $0<r<1$ also shows $(\cP_c(M),\ell_q)$ fails to be timelike nonbranching.
  \end{example}
 
  Let $\qopt = \qopt(\mu^-,\mu^+)$ denote the set of measures that optimize \eqref{l_q},
 and $\qopT(\mu^-,\mu^+) =( \qopt \cap \Gamma_\ll)(\mu^-,\mu^+)$.   To exclude cases where %$\qopT(\mu^-,\mu^+)$
 this intersection is empty, 
 we recall the following definition \cite{CavallettiMondino20+} which synthesizes \cite[\S 7]{McCann20}:

\begin{definition}[Timelike $q$-dualizability]
\label{D:dualizability}
In a proper Lorentzian prelength space, we say $(\mu^-,\mu^+) \in \cP_c(M)^2$ is {\em timelike $q$-dualizable} (by $\gamma$) if (i) $\ell_q(\mu^-,\mu^+) \in (-\infty,0)$ and 
(ii) $\gamma \in \Gamma^q_\ll(\mu^-,\mu^+)$.
% and (iii) there exist $u \oplus v \in L^1(\mu^- \otimes \mu^+)$ lower semicontinuous such that $u \oplus v \ge (-\ell)^q$ on $\spt [\mu^- \otimes \mu^+] \cap M^2_\le$.  
We say $(\mu^-,\mu^+)$ is strongly timelike $q$-dualizable if, in addition, (iii) there is a measurable $(-\ell)^q$-cyclically monotone set
$S \subset M^2_\ll \cap \spt[\mu^- \otimes \mu^+]$ such that $\gamma \in \Gamma_\le(\mu^-,\mu^+)$ is $\ell_q$-optimal 
if and only if $\gamma[S]=1$.  To be $(-\ell)^q$ cyclically monotone means every sequence $((x_i,y_i))_{i=1}^\infty$ in $S$
satisfies
$$
\sum_{i=1}^j (-\ell(x_i,y_i))^q \ge (-\ell(x_1,y_j))^q + \sum_{i=2}^j (-\ell(x_i,y_{i-1}))^q
$$
for each $j \in \bN$, with the convention $(-\infty)^q=-\infty$.
\end{definition}

Essentially,  timelike $q$-dualizability implies existence of a timelike optimizer (i.e. $\qopT(\mu^-,\mu^+)$ is non-empty), whereas strong timelike $q$-dualizability implies all optimizers are timelike: $\qopt \subset \Gamma_\ll(\mu^-,\mu^+)$.

We shall also need to consider probability measures on $\ell$-geodesics $\TGeol$, sometimes called {\em dynamic transport plans} or simply {\bf plans}.  
Recall that the space $C([0,1]; M)$ of continuous paths in $M$
%metrized by %by $d^\infty$ from 
with the uniform metric \eqref{uniform metric}
 is Polish if $(M,d)$ is.  When $M$ is a compact causal geodesic space, then $\TGeol \cap C([0,1];M)$ is $\sigma$-compact, since \cite[Corollary B.7]{Braun22.6+}
 shows $\{ \sigma \in \TGeol\cap C([0,1];M) \mid - \ell(\sigma(0),\sigma(1)) \ge r >0\}$ to be compact (and uniformly equicontinuous) for each $r>0$.
Denote by $\OptTGeoq(\mu,\nu)$ the set of measures $\eta \in \cP(\TGeol)$ for which $(e_0 \times e_1)_\#\eta \in \qopT(\mu,\nu)$:
these are examples of {\bf optimal plans};
here $e_t:\sigma \in C([0,1];M) \mapsto \sigma(t)$ denotes the time $t$ evaluation map.  
%When $$\gamma \in \qopt(\mu,\nu)$ has the additional properties that $t \in [0,1] \mapsto e_t)_\#$.
Braun's Proposition B.9 [ibid] uses measurable selection techniques to show that 
when $(\mu,\nu)$ is timelike $q$-dualizable (so that timelike optimizers exist), each 
$\gamma \in \qopT(\mu,\nu)$ is induced as 
$\gamma = (e_0 \times e_1)_\#\eta$ for some $\eta \in \OptTGeoq$.  As in Cavalletti \& Mondino's Proposition 2.32.7 \cite{CavallettiMondino20+}, $s \in [0,1] \mapsto \mu_s := (e_s)_\#\eta$ is then an $\ell_q$-path,  narrowly continuous (against bounded continuous test functions), %by the dominated convergence theorem and continuity of $\ell$-paths), 
and $d_1$-rectifiable, hence an $\ell_q$-geodesic in
$(\cP_c(M),d_1,\ell_q)$:

\begin{lemma}[On $\ell_q$-geodesics in $(\cP_c(M),d_1,\ell_q)$]
\label{L:d_1 rectifiability}
Fix $q \in (0,1]$ and let $(M,d,\ell)$ be a causal geodesic space and $\eta \in \OptTGeoq(\mu,\nu)$ for some $\mu,\nu \in \cP_c(M)$.  
Then $s \in [0,1] \mapsto \mu_s := (e_s)_\#\eta$ is an $\ell_q$-path, narrowly continuous, and $d_1$-rectifiable:
\end{lemma}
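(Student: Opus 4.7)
The plan is to verify each of the three properties of $s \mapsto \mu_s := (e_s)_\# \eta$ in turn, making use of the $\ell$-path property of $\eta$-almost every curve and the compactness granted by $\cK$-global hyperbolicity.

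For the $\ell_q$-path property, I would first produce a one-sided estimate using the natural coupling $\gamma_{s,t} := (e_s \times e_t)_\# \eta$, which is causal because each $\sigma \in \TGeol$ satisfies $\sigma(s) \le \sigma(t)$. By the affine $\ell$-path property \eqref{Geol}, $|\ell(\sigma(s), \sigma(t))|^q = (t-s)^q |\ell(\sigma(0), \sigma(1))|^q$ pointwise, and since $(e_0 \times e_1)_\# \eta$ is $\ell_q$-optimal by hypothesis, integration against $\eta$ gives
\[
-\ell_q(\mu_s, \mu_t) \le \Bigl(\int |\ell(\sigma(s), \sigma(t))|^q d\eta(\sigma)\Bigr)^{1/q} = -(t-s)\ell_q(\mu_0, \mu_1).
\]
To close the loop, I apply the triangle inequality \eqref{causal triangle inequality} for $\ell_q$ and substitute the above bound at $(0,s)$ and $(t,1)$ to obtain $(t-s)\ell_q(\mu_0,\mu_1) \le \ell_q(\mu_s,\mu_t)$. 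Equality then forces \eqref{Geol} for $(\mu_s)_{s\in[0,1]}$. Non-degeneracy of $\ell_q(\mu_0, \mu_1)$ comes from the assumption $\eta \in \OptTGeoq(\mu,\nu)$: its endpoint marginal is a timelike coupling (so $\ell_q < 0$), while the continuous $\ell_-$ is finite on the compact set $\spt \mu \times \spt \nu$ (so $\ell_q > -\infty$).

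Narrow continuity of $s \mapsto \mu_s$ follows from a dominated convergence argument: for any bounded continuous test function $\phi$ and any sequence $s_n \to s$, continuity of each $\sigma \in \TGeol \subset C([0,1], M)$ yields pointwise convergence $\phi(\sigma(s_n)) \to \phi(\sigma(s))$, and the uniform bound $\|\phi\|_\infty$ allows one to pass this limit under the integral $\int \phi(\sigma(s))\, d\eta(\sigma) = \int \phi\, d\mu_s$.

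For $d_1$-rectifiability, I would exploit compactness: by $\cK$-global hyperbolicity, $X := J(\spt \mu, \spt \nu)$ is compact, so Lemma~\ref{L:rectifiability of causal paths} supplies a uniform constant $C$ bounding the $d$-length of every causal path in $X$; since every $\sigma$ in the support of $\eta$ is a causal path with endpoints in $\spt \mu \times \spt \nu$, one has $L_d(\sigma) \le C$ for $\eta$-almost every $\sigma$. For any partition $0 = s_0 < \cdots < s_N = 1$, the couplings $(e_{s_{i-1}} \times e_{s_i})_\# \eta$ combined with Fubini give
\[
\sum_{i=1}^N d_1(\mu_{s_{i-1}}, \mu_{s_i}) \le \int \sum_{i=1}^N d(\sigma(s_{i-1}), \sigma(s_i))\, d\eta(\sigma) \le \int L_d(\sigma)\, d\eta(\sigma) \le C,
\]
so $L_{d_1}(s \mapsto \mu_s) \le C < \infty$. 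The main obstacle, modest but real, is the reverse direction of the $\ell_q$-path estimate: one must invoke the triangle inequality for $\ell_q$ judiciously and verify that its finiteness hypothesis in \eqref{causal triangle inequality} is met throughout, which is where the preceding one-sided bound and the nonvanishing of $\ell_q(\mu_0, \mu_1)$ are essential.
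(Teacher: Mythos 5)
Your proposal follows essentially the same route as the paper's own proof: one inequality from the natural coupling $(e_s\times e_t)_\#\eta$, the reverse from the triangle inequality applied at $(0,s)$ and $(t,1)$; narrow continuity by dominated convergence using continuity of the curves in $\TGeol$; and $d_1$-rectifiability from the uniform bound on $L_d$ over causal curves in the compact set $J(\spt\mu,\spt\nu)$. The one issue is a sign slip in your displayed coupling estimate: since $\ell_q$ is defined as an \emph{infimum} over causal couplings and $\gamma_{s,t}$ is merely a candidate, the inequality goes the other way, namely
$$
-\ell_q(\mu_s,\mu_t)\ \ge\ \Bigl(\int |\ell(\sigma(s),\sigma(t))|^q\,d\eta(\sigma)\Bigr)^{1/q}=-(t-s)\ell_q(\mu_0,\mu_1),
$$
i.e.\ $\ell_q(\mu_s,\mu_t)\le(t-s)\ell_q(\mu_0,\mu_1)$. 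As written, your first bound and the bound you extract from the triangle inequality are the \emph{same} inequality, so equality is not actually forced. With the corrected direction the loop closes exactly as you indicate (and as in the paper): the coupling bound gives $\le$, and plugging the coupling bounds at $(0,s)$ and $(t,1)$ into the triangle inequality gives $\ge$, whence \eqref{Geol} holds with $\ell_q(\mu_0,\mu_1)\in(-\infty,0)$ by the timelike optimality and compactness arguments you already supply.
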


\begin{proof}
1. For $0\le s<t \le 1$ we have
\begin{align*}
|\ell_q(\mu_s,\mu_t)|^q 
&\ge \int_{\TGeol} |\ell(\sigma(s),\sigma(t))|^q d\eta(\sigma) 
\\&= (t-s)^q \int |\ell(x,y)|^q d(e_0,e_1)_\# \eta
\\&= (t-s)^q |\ell_q(\mu_0,\mu_1)|^q
\\&> 0
\end{align*}
by the definition of $\OptTGeoq(\mu_0,\mu_1)$.  
Using the $q$-th root of this to estimate
$$
\ell_q(\mu_0,\mu_1) \le \ell_q(\mu_0,\mu_s) +  \ell_q(\mu_s,\mu_t) + \ell_q(\mu_t,\mu_1)
$$ 
shows all these nonstrict inequalities must be saturated,  so $(\mu_s)_{s\in[0,1]}$ is an $\ell_q$-path.

2.  Letting $f \in C(M)$ be continuous and bounded yields
\begin{align*}
\lim_{s \to t} \int f d \mu_s
&=\lim_{s \to t} \int_{\TGeol} f(\sigma(s)) d\eta(\sigma) 
\\&=  \int f d\mu_t
\end{align*}
by the dominated convergence theorem and the continuity of $\ell$-paths (Lemma \ref{L:continuity of geodesic paths}).

3.  In a causal geodesic space,  compactness of $Z := J(X_0,X_1)$ follows from that of $X_i=\spt \mu_i$,
and there exists a bound $B$ on the $d$-length of causal curves in $Z$.
 Letting $0=t_0 < t_1 <\cdots < t_j =1$ be an arbitrary partition,
\begin{align*}
\sum_{i=1}^j d_1(\mu(t_{i}),\mu(t_{i-1})) 
&\le \sum_{i=1}^j \int_{\TGeol} d(\sigma(t_i),\sigma(t_{i-1})) d\eta(\sigma) 
\\&\le \int L_d(\sigma) d\eta(\sigma)
\\&\le B.
\end{align*}
Taking the supremum over partitions yields $L_{d_1}(\mu) \le B$ as desired.
\end{proof}
  
\begin{remark}[Relating timelike $q$-dualizability to $\ell_q$-geodesics]\label{R:existence of plans}
Fix a proper causal geodesic space $(M,d,\ell)$. If $\mu_0,\mu_1 \in \cP_c(M)$ are timelike $q$-dualizable --- so that
$\gamma \in \Gamma^q_\ll(\mu_0,\mu_1)$ exists --- we have just argued that there exists an $\ell_q$-geodesic $(\mu_t)_{t \in[0,1]}$ 
with $\mu_t = (e_t)_\#\eta$ and $\gamma=(e_0 \times e_1)_\#\eta$ for some $\eta \in \OptTGeoq(\mu_0,\mu_1)$.
%It is tempting to suppose that if  
Similarly, when $\Gamma^q(\mu_0,\mu_1) = \Gamma^q_\ll (\mu_0,\mu_1)$ for some $\mu_0,\mu_1 \in \cP_c(M)$---
as when $(\mu_0,\mu_1)$ are strongly timelike $q$-dualizable --- each $\gamma \in \Gamma^q_\ll(\mu_0,\mu_1)$ 
arises as $\gamma=(e_0 \times e_1)_\#\eta$ from some $\eta \in \OptTGeoq(\mu_0,\mu_1)$; this induces
a $d_1$-Lipschitz $\ell_q$-geodesic $\mu_t := (e_t)_\#\eta$ in $(\cP_c(M),d_1,\ell_q)$ by Lemma \ref{L:d_1 rectifiability}.  
In the strongly timelike $q$-dualizable case one might expect all $\ell_q$-paths to arise in this way,
but a proof of this conjecture remains elusive unless we assume compactness of the closed set $\AGeo^\ell(J(X,X))$ in the uniform
metric \eqref{uniform metric} for each compact subset $X \subset M$
(in which case a dyadic optimization can be iterated and a limiting measure $\eta$ on $\AGeol(J(\spt\mu_0,\spt\mu_1))$ extracted as in
Theorem 2.10 of \cite{AmbrosioGigli13o},
 with $\gamma=(e_0 \times e_1)_\#\eta \in \Gamma^q(\mu_0,\mu_1)$ optimizing, hence timelike).
\end{remark}

\subsection{Smooth versus metric-measure spacetimes}

The other major ingredient which goes into defining synthetic lower Ricci curvature bounds is a reference measure $m$ on $(M,d)$.
A metric spacetime $(M,d,\ell)$ equipped with a nonnegative Borel measure $m$ is therefore called a  {\bf metric-measure spacetime} or  {\em m.m.s.t}.
We call $(M,d,\ell,m)$ a {\bf measured causal geodesic space} if, in addition $(M,d,\ell)$ is a causal geodesic space and $m$ 
assigns finite mass to bounded sets. % and satisfies $M=\spt m$. 
%Such objects are special cases of what might more generally be called
%(in analogy with terminology in positive signature).

\begin{definition}[Smooth spacetimes]\label{D:smooth spacetime}
We call a smooth, connected, Hausdorff, time-oriented, $n$-dimensional Lorentzian manifold $(M^n,g)$ of signature $(+-\ldots-)$ a  {\em (smooth) spacetime}. % or {\em spacetime}.
%or a {\em smooth spacetime}.
\end{definition}

Results of Nomizu, Ozeki \cite{NomizuOzeki61} and Geroch \cite{Geroch68} 
show that any smooth spacetime is second countable and that its topology is metrized by the distance $d_{\tilde g}$ induced by some 
complete Riemannian metric $\tilde g$.  
The prototypical example of a proper causal geodesic space (and also of a proper globally hyperbolic regular LLS) therefore consists of a smooth spacetime endowed with its usual causality relations and time-separation function $\tau=\ell_-$ induced by $g$,  under the additional assumptions that there are no closed causal loops and for each $x,y \in M$ the causal diamond $J(x,y)$ is compact (corresponding to global hyperbolicity).
We call such an object a {\bf smooth globally hyperbolic spacetime}.

\begin{remark}[%Uniqueness and its failure in the smooth setting
Nonunique geodesics of measures on manifolds]\label{R:unique}
On a smooth globally hyperbolic spacetime $(M^n,g)$ with $0<q<1$,  timelike $q$-dualizability of $(\mu_0,\mu_1) \in \cP_c(M)^2$
and absolute continuity of $\mu_0$ or $\mu_1$ with respect to $\vg$ imply existence of a unique 
$\eta \in \cP(\TGeol)$ such that $\mu_t = (e_t)_\#\eta$ is an $\ell_q$-path from $\mu_0$ to $\mu_1$.
This is because strict convexity of the Lagrangian in the timelike region 
allows the duration and direction separating $\mu_0$-a.e. event from its partner to be extracted from derivatives of the solution
of the dual linear program to \eqref{l_q} \cite[Theorem 7.1]{McCann20}.  
Taking derivatives poses obvious challenges at the light cone and in the nonsmooth setting \cite{BeranOctet23+}.
When timelike $q$-dualizability fails to be strong, 
uniqueness {may fail the broader class of measures $\cP\left(\AGeol \right)$ on causal (rather than timelike) $\ell$-geodesics,
as the following example illustrates.}
\end{remark}

\begin{example}[Mixed timelike and lightlike translations]
Fix the usual coordinates $(x^1,x^2)$ on the plane $M=\R^2$ and consider the Minkowski metric $g_{ij} dx^i dx^j = dx_1^2 - dx_2^2$,  Euclidean metric
$\tilde g_{ij} dx^i dx^j = dx_1^2 + dx_2^2$, and
associated time-separation function $\ell=\ell_g$, distance $d=d_{\tilde g}$,
and area element %$d^2x$. %volume 
$\vg$.  
For $T > 0$,   %is a $q$-optimal map between each compactly supported 
consider $\mu \in \cP_c(M)$ and its time translate $\nu=F^T_\#\mu$ by $F^T(x^1,x^2) = (x^1+T,x^2)$.
Using $\gamma:=(id \times F^T)_\#\mu$ as a trial measure in \eqref{l_q} gives $\ell_q(\mu,\nu) \le -T$.
Moreover, we claim that equality holds, meaning $F^T$ is an $\ell_q$-optimal map; 
when $T$ is sufficiently large (so $\mu$ and $\nu$ are $q$-separated),  
this claim follows from \cite[Theorem 5.9]{McCann20} by constructing explicit dual potentials $u(x) + v(y) \ge (-\ell(x,y))^q$ on
$\spt [\mu \times \nu]$ which produce equality $\gamma$-a.e.  For smaller $T$ it then follows from Corollary 5.9 of the same reference.
Thus translation by any timelike future-directed vector in Minkowski space is $\ell_q$-optimal.  
To show translation by any future-directed null vector is also $\ell_q$-optimal requires an additional argument.
Let $G^T(x^1,x^2):=(x^1+T,x^2+T)$ and $\omega:=G^T_\#\mu$.  Using the trial measure $\gamma:=(id \times G^T)_\#\mu$
shows $\ell_q(\mu,\omega)\le 0$; again we claim that equality holds.  To verify this, consider the family of Minkowski metrics $g^\epsilon= (g + \epsilon \tilde g)$
having wider light cones but converging to $g=g^0$ as $\epsilon \searrow 0$, and their time-separation functions $\ell^\epsilon \le \ell$.  Notice $G_T$ represents a timelike translation hence
$\ell^\epsilon_q$-optimal map for all $\epsilon>0$. Thus $\ell_q^\epsilon(\mu,\omega) = T\sqrt{2 \epsilon} \le \ell_q(\mu,\omega)\le 0$ tends to zero as $\epsilon \searrow 0$,
hence $\ell_q(\mu,\omega)=0$ as desired.
For $T>0$ large enough that $\mu$ and $\omega$ have disjoint support,  there will be also be $\ell_q$-optimal maps other than $G^T$, including one which is order-reversing
instead of order-preserving along the (right-moving) lightlight geodesics in $M=\R^2$.  This is analogous to the better known nonuniqueness of optimal measures $\gamma$ 
attaining $d_1(\mu,\omega)$ 
(which cannot be resolved without adding some requirement of monotonicity in the direction of transport \cite{FeldmanMcCann02u}).  Now imagine $\mu = 1_{B_r(-z)} + 1_{B_r(z)}$ to consist of a uniform measure on the disjoint union of two far apart Euclidean balls centered at $\pm z=(0,\pm R)$ with $r<1<R$. For $T>2r$ we find the optimal measure
between $\mu$ and $\nu =  F^T_\#(1_{B_r(-z)}) + G^T_\#(1_{B_r(z)})$ is non-unique,  consisting of the timelike translation $F^T$ on the first ball,  and either the lightlike translation $G^T$ or any of the other optimal options previously asserted to exist on the second ball.
\end{example}

\begin{definition}[% and %
Bakry-\'Emery, smooth m.m.s.t., $N$-Ricci curvature%metric-measure spacetimes
]\label{D:Bakry-Emery spacetime}
When equipped with its Lorentian volume $dm=d\vg$ a smooth globally hyperbolic spacetime 
becomes a measured causal geodesic space.  We may instead choose to equip $(M^n,g)$
with a volume given by a smooth weight $V \in C^\infty(M)$
$$
dm(x) = e^{-V(x)}d\vg(x),
$$
in which case it becomes an example of a {\bf smooth metric-measure spacetime},  also known as a Bakry-\'Emery spacetime (in honor of \cite{BakryEmery85}).
Associated to the weight and a parameter $N\in (-\infty,\infty]$ is a modification of the Ricci tensor,  known as the {\bf $N$-Ricci} or {\bf Bakry-\'Emery} tensor
\cite{Case10} 
\begin{equation}\label{N-Ricci}
\Rc^{(N,V)} := \Rc + D^2 V - \frac1{N-n} DV \otimes DV;
\end{equation}
we require $V=const$ if $N=n$.
\end{definition}

\subsection{Synthetic timelike convergence conditions}

Let $\mu\ge 0$ be a Borel measure on a metric-measure spacetime $(M, d,\ell, m)$,
and assume $\mu$ vanishes outside a set $S$ of finite $m$-volume.  If $\mu$ is absolutely continuous
with respect to $m$, we define its Boltzmann-Shannon relative {\bf entropy} by %of $\mu \in \cP_c(M)$ is defined by
\begin{equation}\label{entropy}
H(\mu \mid m) := %\left\{
%\begin{cases} 
\int_M \frac{d\mu}{dm} \log \frac{d\mu}{dm} dm,  
%&{\rm if}\  \mu \in \cM_c^{ac}(M)
%\\ +\infty & {\rm otherwise}, 
%\end{cases}
%\right.
\end{equation}
setting $H(\mu \mid m) =+\infty$ otherwise.  Notice $H(\mu \mid m) > \mu(M)-m(S)$ has a well-defined value in $(-\infty,\infty]$ since $s \log s \ge s-1$.
% and  $\mu$ vanishes outside a set of finite $m$-volume.
%with $\cP_c^{ac}(M)$ denoting the class of compactly supported probability measures which are absolutely continuous with respect to $m$.

For $K \in \R$ and $N>0$,  recall a function $h:\R \longrightarrow \R \cup \{+\infty\}$ is said to be {\bf $(K,N)$-convex} 
if: $h$ is upper semicontinuous, $\Dom h:= \{s \in \R \mid h(s)<\infty\}$ is connected,  $h$ is {\em semiconvex} (meaning
$s \in \R \mapsto h(s) + \lambda s^2$ is convex for $\lambda>1$ sufficiently large), and
%on the interior $A$ of $\Dom h$, and 
$$
h''(s) - \frac1N (h'(s))^2 \ge K
$$
holds throughout the interior of $\Dom h$ in the distributional sense \cite{ErbarKuwadaSturm15}.

On a smooth globally hyperbolic spacetime one obtains my characterization \cite{McCann20}
of Case \cite{Case10}, Woolgar \& Wylie's timelike lower $N$-Ricci curvature bounds  \cite{WoolgarWylie16}; 
see also Mondino \& Suhr \cite{MondinoSuhr23}, Cavalletti \& Mondino \cite%[Theorem 3.1]
{CavallettiMondino20+}, and the proof at the end of this section addressing the case $0<N<n$.

\begin{theorem}[Timelike curvature-dimension bounds]\label{T:consistency}
Let $(M^n,g)$ be a smooth globally hyperbolic spacetime, $0<q<1$, $K \in \R$ and let $\vg$ denote the Lorentzian volume measure.
Fix $N \in (0,\infty]$ and $dm = e^{-V}d\vg$ for some $V \in C^\infty(M)$.
Suppose (i) for any strongly timelike $q$-dualizable $(\mu_0,\mu_1) \in \cP_c(M)^2$ having finite entropy,  there is a measure
$\eta \in \cP(\TGeol)$ such that $\mu_t = (e_t)_\#\eta$ is an $\ell_q$-path from $\mu_0$ to $\mu_1$ 
along which $H(\mu_t \mid m)$
is a $(KT^2,N)$-convex function of $t \in [0,1]$, with $T=\|L_\ell\|_{L^2(\eta)}$ as in \eqref{Lorentzian length}.
Then (ii) the same statement holds if we drop the adjective `strongly'.  
Moreover, either of these statements is equivalent to
(iii) $N\ge n$ and 
$\Rc^{(N,V)}(v,v) \ge Kg(v,v)$ for every timelike $v \in TM$.
\end{theorem}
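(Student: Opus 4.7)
The plan is to prove the cycle (ii)$\Rightarrow$(i)$\Rightarrow$(iii)$\Rightarrow$(ii). The implication (ii)$\Rightarrow$(i) is immediate, since every strongly timelike $q$-dualizable pair is in particular timelike $q$-dualizable. The meat of the argument lies in (i)$\Rightarrow$(iii) and in the new case $0<N<n$ of (iii)$\Rightarrow$(ii) (the range $N\in[n,\infty]$ being settled in \cite{McCann20}, \cite{MondinoSuhr18+}, \cite{CavallettiMondino20+}).

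For (iii)$\Rightarrow$(ii), I would first reduce to the case where $\mu_0,\mu_1\in\cP_c^{ac}(M)$ have bounded densities compactly supported in a small globally hyperbolic diamond; density of such pairs, plus lower semicontinuity of entropy and of $\ell_q$, handles the general case by approximation. On such a pair, Lorentzian $q$-Kantorovich duality (as in \cite{McCann20}) produces a $(-\ell)^q$-cyclically monotone map $F$ and an associated optimal $\eta\in\OptTGeoq(\mu_0,\mu_1)$ induced by geodesic interpolation $\gamma_t=\Exp_x(t\cdot \dot\gamma_0)$ along smooth timelike geodesics with $F(x)=\gamma_1$. Setting $\mu_t=(e_t)_\#\eta$, the density of $\mu_t$ evolves by the Jacobian $J_t$ of the flow, and the Raychaudhuri/Jacobi field equation gives
\begin{equation*}
\frac{d^2}{dt^2}\log J_t^{1/n} \le -\tfrac{1}{n}\Rc(\dot\gamma_t,\dot\gamma_t) - \tfrac{1}{n}(\tfrac{d}{dt}\log J_t^{1/n})^2.
\end{equation*}
Incorporating the weight $V$ by computing $\ddot V(\gamma_t)$ and regrouping the $(DV\otimes DV)/(N-n)$ term via the Cauchy--Schwarz identity
$\tfrac{(a+b)^2}{N}\le \tfrac{a^2}{n}+\tfrac{b^2}{N-n}$ (valid precisely when $N\ge n$) converts the Jacobi inequality into the one-dimensional $(K|\dot\gamma_t|_g^2,N)$-convexity inequality for $\log(J_t e^{V})^{-1}$. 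Integrating against $\eta$ and using $T^2=\|L_\ell\|_{L^2(\eta)}^2=\int|\dot\gamma|_g^2\,d\eta$ yields $(KT^2,N)$-convexity of $t\mapsto H(\mu_t\mid m)$.

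For (i)$\Rightarrow$(iii), I would localize. Fix $p\in M$ and a future-directed timelike unit vector $v\in T_pM$; pick a small $\delta>0$ so that $\Exp_p$ is a diffeomorphism on a neighbourhood of $\{(1-t)\delta\,v : t\in[0,1]\}$. For $\epsilon\ll\delta$, let $\mu_0$ be the $g$-uniform measure on a small normal spacelike ball $B_\epsilon(p^-)$ near $p^-=\Exp_p(-\tfrac12\delta v)$ and $\mu_1$ its $g$-translate near $p^+=\Exp_p(\tfrac12\delta v)$; for $\epsilon,\delta$ small these are strongly timelike $q$-dualizable by smoothness of $\ell$ and the inverse function theorem. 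The (essentially unique) $\ell_q$-geodesic is given by the exponential geodesic flow, so computing $\tfrac{d^2}{dt^2}H(\mu_t\mid m)\big|_{t=1/2}$ via the Jacobi-field formula and letting $\epsilon\to 0$ extracts $\Rc^{(n,V)}(v,v)=\Rc(v,v)+\Hess V(v,v)$ as the leading contribution; the $(KT^2,N)$-convexity hypothesis then yields $\Rc^{(N,V)}(v,v)\ge K g(v,v)$ after the appropriate completion-of-squares argument. For the dimensional bound $N\ge n$, the obstruction is that (i) must apply to highly \emph{anisotropic} pairs: take $\mu_0$ uniform on an $\epsilon$-sized transverse disk and $\mu_1$ its image under a geodesic flow scaled to contract by a factor $\lambda$ in the transverse directions; then $H(\mu_t\mid m)$ contains a contribution $-n\log|\text{Jacobian}|$ whose second-derivative/first-derivative ratio forces $N\ge n$ in the $(KT^2,N)$-convexity inequality as $\lambda\to 0$.

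The main obstacle I anticipate is the $0<N<n$ case: one must construct, for any purported $K$, a strongly timelike $q$-dualizable pair on (say) Minkowski $\R^n_1$ with the volume measure, along whose unique $\ell_q$-geodesic the entropy fails $(KT^2,N)$-convexity. The candidate is a shear transport whose Jacobian determinant evolves as $\det(I+t A)$ for a traceless-plus-dilation matrix $A$ of rank $n-1$; the quadratic-in-$\dot h$ correction term $-(\dot h)^2/N$ in the convexity inequality cannot absorb the genuinely $n$-dimensional spread, giving the desired contradiction. Careful bookkeeping of the Cauchy--Schwarz split $\tfrac{a^2}{n}+\tfrac{b^2}{N-n}\ge \tfrac{(a+b)^2}{N}$ (which fails when $N<n$ with $N-n<0$) is the key identity that both drives (iii)$\Rightarrow$(ii) in the admissible range and obstructs (i) outside it.
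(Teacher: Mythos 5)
Your reduction (ii)$\Rightarrow$(i), your sketch of (iii)$\Rightarrow$(ii) for $N\ge n$ via duality and Raychaudhuri, and your use of $\text{[McCann20]}$-style localization to extract $\Rc^{(N,V)}(v,v)\ge Kg(v,v)$ from (i) all match the content the paper delegates to Corollary 7.5 and Theorem 8.5 of \cite{McCann20}. The genuinely new part of this theorem is ruling out $0<N<n$, and here your proposal has a real gap.

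You correctly identify that a counterexample should live in Minkowski space, but you do not explain how a Minkowski computation transfers to the general globally hyperbolic weighted spacetime $(M^n,g,e^{-V}d\vg)$ on which hypothesis (i) is actually posed. This is precisely the step the paper supplies, and it is not a small technicality. The paper avoids any curved-space Jacobi-field bookkeeping by a blow-up argument: the rescaled spaces $X_\lambda:=(M^n,\lambda^2 g,\lambda^n e^{-V}d\vg)$ (pointed at $\bar x$, normalized by $V(\bar x)=0$) converge to Minkowski $X_\infty=(\R^n_1,\epsilon,\vol_\epsilon)$ in the pointed-measured sense of Cavalletti--Mondino, and one observes $X_1\in\TCD^e_q(K,N)\Rightarrow X_\lambda\in\TCD^e_q(K/\lambda^2,N)$; their stability theorem then forces $X_\infty\in w\TCD^e_q(0,N)$. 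The contradiction is closed by the Dirac contraction $\mu_s=(1-s)_\#\mu_0$ in Minkowski, along which $h(s)=h(0)-n\log(1-s)$ gives $h''-h'^2/N=\tfrac{n}{(1-s)^2}(1-\tfrac nN)<0$, so $X_\infty\notin w\TCD^e_q(0,N)$ whenever $N<n$. Your proposed rank-$(n-1)$ shear is both harder to compute cleanly than this full contraction and, more importantly, still only lives in $\R^n_1$; without the blow-up-and-stability bridge (or else a direct but tedious curved-space interpolation argument which the paper explicitly declines to carry out) it does not establish that (i) fails on a general $(M^n,g,e^{-V}d\vg)$.
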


Unlike property (iii), properties (i) and (ii) of Theorem \ref{T:consistency} make sense in a metric-measure spacetime independently
of whether or not it possesses any manifold structure.  Motivated by this characterization (and by developments in positive signature, such as  
\cite{LottVillani09} \cite{Sturm06a} \cite{Sturm06b} \cite{ErbarKuwadaSturm15}),  
Cavalletti \& Mondino defined a family of timelike curvature dimension condition as follows: %$\TCD^e_q(K,N)$:
given parameters $(K,N,q) \in \R \times (0,\infty) \times (0,1)$, 
a proper measured causal geodesic space $(M,d,\ell,m)$ satisfies $\TCD^e_q(K,N)$
if and only if property (ii) of Theorem \ref{T:consistency} holds; it satisfies a weaker variant $w\TCD^e_q(K,N)$ if and only if (i) of the same theorem holds.
Here $\TCD$ stands for {\bf timelike curvature dimension} with the superscript $e$ denoting the {\bf entropic} variant defined using the (logarithmic)
Boltzmann-Shannon entropy \eqref{entropy} as in Erbar, Kuwada \& Sturm \cite{ErbarKuwadaSturm15} and McCann \cite{McCann20} instead of the $N$-R\'enyi (power law) entropy 
used by Lott \& Villani, \cite{LottVillani09}, Sturm \cite{Sturm06b}, Braun \cite{Braun22.6+}, etc.
They go on to show such spaces have many remarkable properties, such as timelike Bishop-Gromov and Brunn-Minkowski inequalities.  
Moreover,  they show a version of the Hawking singularity theorem
 remains true in $w\TCD^e_q(0,N)$ spaces. 
  Simultaneously and independently,  a nonsmooth analog of Hawking's theorem was proven in positive signature
 by Burtscher, Ketterer, Woolgar and myself \cite{BurtscherKettererMcCannWoolgar20}: in a $\CD(K,N)$ space,  we show that any mean convex set obeys
 an explicit bound on its inscribed radius, and  in $\RCD(K,N)$ spaces we are able to classify the cases of equality.

For Lorentzian manifolds,  convergence notions analogous to Gromov-Hausdorff convergence have been considered since works of 
Noldus \cite{Noldus02} \cite{Noldus04a} \cite{Noldus04b} and Bombelli \cite{BombelliNoldus04}.  Sormani and Vega \cite{SormaniVega16} showed how a time function on a smooth spacetime generates a (so-called {\em null}) distance which topologizes the manifold so that Gromov-Hausdorff notions of convergence of spaces can then be applied.
This approach has been further developed by Allen and Burtscher \cite{AllenBurtscher22}, and extended to Lorentzian length spaces by Kunzinger \& Steinbauer
\cite{KunzingerSteinbauer22}, who explored its compatibility with sectional curvature bounds. These notions of convergence depend on the Lorentzian metric and time function alone,
and not on a reference measure.  In contrast,
Cavalletti \& Mondino introduced a notion of {\bf pointed-measured convergence} for metric-measure spacetimes, and show the set $w\TCD^e_q(K,N)$ contains all such limits of $\TCD^e_q(K,N)$ spaces \cite{CavallettiMondino20+}.  More precisely,  they say a sequence $(M_j,d_j,\ell_j,m_j)$ of measured causally
geodesic spaces with $x_j \in M_j$ for all $j \in \bN \cap \{\infty\}$ converges to a limiting measured causal geodesic space $(M_\infty,d_\infty,\ell_\infty,m_\infty,x_\infty)$ 
 if and only if the entire sequence embeds $d$-continuously
and $\ell$-isometrically into a proper causal geodesic space $(M,d,\ell)$,  so that $d(x_j,x_\infty)\to 0$ and $m_j \to m_\infty$ weakly against continuous compactly supported test functions.
This is modelled on one out of several equivalent notions of convergence in positive signature \cite{GigliMondinoSavare15}; in a doubling space all 
are equivalent to Gromov-Hausdorff convergence.
Distinct but related variants of Gromov-Hausdorff convergence have been proposed for bounded Lorentzian metric spaces by Minguzzi \& Suhr \cite{MinguzziSuhr22+} 
and using a functorial approach by Mueller \cite{Mueller22+}; however they involve synthesizing nonsmooth Lorentzian geometry in different ways.
Minguzzi \& Suhr also show that their notion of convergence preserves sectional curvature bounds formulated through triangle comparison.

Cavalletti \& Mondino do not show either $\TCD_q^e(K,N)$ or $w\TCD_q^e(K,N)$ to be closed for their notion of convergence.
Rather, given a pair of timelike $q$-dualizable measures $(\mu_0^\infty,\mu_1^\infty)$ in the limiting space,  they extract the required optimal plan $\eta^\infty\in \cP(\AGeo^{\ell_\infty})$ as a narrow limit of optimal plans $\eta^j \in \cP(\TGeo^{\ell_j})$ of timelike $q$-dualizable pairs $(\mu_0^j,\mu_1^j)$ in the approximating sequence of spaces.  For $\eta$ to vanish
outside $\TGeo^{\ell_\infty}$ as desired requires {\em strong} timelike $q$-dualizability of $(\mu_0^\infty,\mu_1^\infty)$;  while such a limit can be approximated by 
 timelike $q$-dualizable pairs $(\mu_0^j,\mu_1^j)$,  it is not clear that the  timelike $q$-dualizable of these pairs can be taken to be strong. 
 For measured causal geodesic spaces which are 
{timelike non-branching} however,  Braun showed that, as in the smooth case \cite{McCann20}, 
the weak and strong variations of the $\TCD^e_q$ condition coincide \cite{Braun22.6+}.
In fact,  Braun goes further by showing the strong and weak variations coincide on a somewhat larger collection of spaces which do not branch too much,  as quantified by his {\bf $q$-essentially timelike nonbranching condition} introduced in analogy with Rajala \& Sturm's \cite{RajalaSturm14}.  Under this condition,  he also shows the Boltzmann entropy can be replaced by the {\em $N$-R\'enyi entropy} in the definitions of $\TCD_q^e$,  provided the notion of $(K,N)$-convexity is modified using appropriate distortion coefficients.  In contrast to the $\RCD$ condition from the positive signature theory,   
 no stable variant of the $\TCD$ condition has yet been identified that implies the space is
 timelike nonbranching \cite{DengPhD21} or even $q$-essentially timelike nonbranching \cite{RajalaSturm14} as in \cite{AmbrosioGigliSavare14d}  \cite{GigliRajalaSturm16}.
 
 Since existing literature does not seem to address the case $0<N<n$ of Theorem \ref{T:consistency}, we close this section with a proof which covers that case while illustrating the power of the above-mentioned techniques.

\medskip

\begin{proof}[Proof of Theorem \ref{T:consistency}]
Clearly (ii) implies (i), so let us turn to the implication (iii) $\implies$ (ii). 

Assume (iii) holds; then $N \ge n$. If $N > n$, the implication (iii) $\implies$ (ii) follows from Corollary 7.5 of \cite{McCann20}.  On the other hand, if $N=n$ then $V=const$ by hypothesis;  in this case $\Rc^{(n,V)} = \Rc^{(n+\epsilon,V)}$ so applying the preceding sentence
with $N=n+\epsilon$ and then taking the limit $\epsilon \searrow 0$ yields (ii) as desired.
% concludes the demonstration that (iii) $\implies$ (ii).
% recalling that $\Rc^{(N,V)}$ is independent of $N$ when $V = const$.
%If $0<N<n$ then (iii) cannot hold so the implication (iii) $\implies$ (ii) has been established. 

Finally,  we obtain the implication (i) $\implies$ (iii) by considering the two ways in which (iii) can fail.  
If $\Rc^{(N,V)}(v,v) < Kg(v,v)$ for some timelike $v \in TM$,  Theorem 8.5 of \cite{McCann20} implies (i) cannot hold.
It remains to show $0<N<n$ also causes (i) to fail.  A direct proof could be constructed by interpolating 
between any absolutely continuous measure $\mu_0$ supported in the timelike past of a Dirac measure $\mu_1$ %at the origin 
using techniques of \cite{McCann20},
but to do so would be tedious except in the special case $X_\infty=(\R^n_1,\epsilon,\vol_\epsilon)$ of $\R^n$ equipped with its
usual Minkowski metric $\epsilon$ and volume.
In that special case, one finds the relative entropy along the $\ell_q$-path $\mu_s= (1-s)_\#\mu_0$ takes the form $h(s) = H(\mu_s \mid \vol_\epsilon) = h(0)-n \log (1-s)$, so if
$0<N<n$ then
$$
h''(s) - \frac{h'(s)^2}{N} = \frac{n}{(1-s)^2}(1 -\frac{n}{N}) 
$$
diverges to $-\infty$ as $s \to 1$.  This shows $X_\infty \not \in w\TCD(0,N)$ for all $0<N<n$.
On the other hand,  for any fixed point the dilations $X_\lambda:= (M^n,\lambda^2 g,\lambda^n e^{-V}d\vol_g)$
of a smooth metric-measure spacetime around a fixed point $\bar x \in M^n$ converge to Minkowski space $X_\infty$ as $\lambda \to \infty$ in the pointed measured sense of Cavalletti \& Mondino
(with the additive normalization $V(\bar x) = 0$).  If $X_1 \in \TCD_q^e(K,N)$ satisfies (i) for some $N>0$ and $K \in \R$,  then $X_\lambda \in \TCD_q^e(K/\lambda^2,N)$ and the stability result \cite{CavallettiMondino20+} described above yields $X_\infty \in w\TCD_q^e(0,N)$.
But this produces the desired contradiction to  $0<N < n$.
\end{proof}
 
\section{A synthetic null energy condition}
\label{S:NEC}

A successful non-smooth theory of curvature bounds should have three properties:
(a) consistency (b) stability and (c) consequences.   
{\em Consistency} means that it should reduce to the classical notion in the smooth setting.  
{\em Stability} means it should be preserved under suitable limits.  {\em Consequences}
means it should have interesting implications.  Like its progenitors in positive signature
\cite{LottVillani09} \cite{Sturm06a} \cite{Sturm06b},  Cavalletti \& Mondino have shown their timelike 
curvature dimension conditions have versions of all three.  

Let us now turn to an open question highlighted in 
\cite{CavallettiMondino22}:  to find a nonsmooth version of the {\bf null convergence condition} (\NC):
\begin{equation}\label{NC}
\Rc(v,v) \ge 0 \quad {\rm whenever}\ g(v,v)=0. 
\end{equation}
Since $G(v,v)=\Rc(v,v)$ for all null vectors $v$ (irrespective of cosmological constant),
if Einstein's field equation $G= 8\pi T$ holds then
condition \eqref{NC} becomes equivalent to the {\bf null energy condition} (\NE),
\begin{equation}%\label{NC}
T(v,v) \ge 0 \quad {\rm whenever}\ g(v,v)=0.
\end{equation}

Although it is tempting to try to answer this question by developing a theory of measure transportation along the lightlike geodesics from Remark \ref{R:AGeol}
(analogous to the timelike transport theory of the previous section),  many technical challenges arise.  
Somewhat surprisingly, we are able to cirmcumvent these difficulties using a different approach, based on the following smooth theorem that we now establish
in general pseudo-Riemannian signature:
% on a smooth spacetime the existence of a Riemannian metric $\tilde g$ is guaranteed by references following Definition~\ref{D:smooth spacetime}.
%\vfill\eject
\begin{theorem}
[Null bounds imply non-null bounds locally]  \label{T:NECboundsT}
Let $(M^n,g_{ij})$ be a smooth %, signature $(+-\cdots-)$, 
%Lorentzian
pseudo-Riemannian manifold with a continuously differentiable tensor field $F_{ij}$ satisfying
$F(v,v) \ge 0$ for all null vectors $v$.  %If $M$ also admits a smooth Riemannian metric $\tilde g$, % (as all Lorentzian manifolds do),
Then for each compact subset $Z \subset M$  
there is a constant $C_Z \in \R$ such that $F (p,p) \ge C_Z |g(p,p)|$ for all $p \in T_zM$ with $z \in Z$.
\end{theorem}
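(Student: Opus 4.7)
The plan is to reduce the theorem to a local statement on an auxiliary compact sphere bundle and then exploit a Hadamard-type factorization. Fix a smooth Riemannian metric $\tilde g$ on $M$ (available by Nomizu-Ozeki and Geroch, as cited earlier) and let $SZ:=\{p\in TM:\pi(p)\in Z,\ \tilde g(p,p)=1\}$ denote the $\tilde g$-unit sphere bundle over $Z$, which is compact. Writing $\phi(p):=F(p,p)$ and $\psi(p):=g(p,p)$, both $C^1$ on $SZ$ and homogeneous of degree two on $TM$, the theorem reduces by homogeneity to finding some $C\in\R$ with $\phi(p)+C|\psi(p)|\ge 0$ on $SZ$ (which delivers $C_Z=-C$).

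First I would show that the null locus $N:=\{p\in SZ:\psi(p)=0\}$ is a smooth hypersurface of $SZ$. On the vertical tangent space $T_p S_{\pi(p)}=\ker\tilde g(p,\cdot)$ the differential of $\psi$ is $v\mapsto 2g(p,v)$, which cannot vanish identically: otherwise $g(p,\cdot)$ would be proportional to $\tilde g(p,\cdot)$ as functionals on $T_{\pi(p)}M$, but evaluating on $v=p$ gives $g(p,p)=0$ versus $\tilde g(p,p)=1$, forcing the proportionality constant to be zero, hence $g(p,\cdot)\equiv 0$, contradicting $p\ne 0$ and nondegeneracy of $g$. Thus $d\psi|_{SZ}\ne 0$ on $N$, and the implicit function theorem applies.

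The heart of the argument is a local estimate around an arbitrary $p_\infty\in N$. Choose IFT coordinates $(\xi,\eta)$ on a neighborhood $U\subset SZ$ of $p_\infty$ in which $\psi$ becomes the last coordinate $\eta$. In these coordinates $\phi$ is $C^1$ with $\phi(\xi,0)\ge 0$ by hypothesis, so Hadamard's lemma yields
\begin{equation*}
\phi(\xi,\eta)=\phi(\xi,0)+\eta\,h(\xi,\eta),\qquad h(\xi,\eta):=\int_0^1\partial_\eta\phi(\xi,t\eta)\,dt,
\end{equation*}
with $h$ continuous on $U$. Shrinking $U$ to a compact neighborhood of $p_\infty$, $|h|$ is bounded by some $C_{p_\infty}$, so
\begin{equation*}
\phi(\xi,\eta)\ge\phi(\xi,0)-|\eta|\,|h(\xi,\eta)|\ge-C_{p_\infty}|\psi(\xi,\eta)|
\end{equation*}
throughout $U$.

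Finally I globalize by compactness. Cover $N$ by finitely many such neighborhoods $U_1,\ldots,U_k$ with constants $C_1,\ldots,C_k$, and set $W:=\bigcup_i U_i$. The complement $SZ\setminus W$ is compact and disjoint from $N$, so $|\psi|\ge\varepsilon>0$ and $\phi\ge -M$ there for some $\varepsilon,M>0$; thus $\phi\ge -(M/\varepsilon)|\psi|$ on $SZ\setminus W$. Setting $C:=\max(C_1,\ldots,C_k,\,M/\varepsilon)$ gives $\phi+C|\psi|\ge 0$ on all of $SZ$, which by homogeneity extends to $F(p,p)\ge -C\,|g(p,p)|$ for every $p\in T_zM$ with $z\in Z$. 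The main delicate step is the transversality verification $d\psi\ne 0$ on $N$; the observation that each $p\in N$ is simultaneously $g$-null and $\tilde g$-unit (hence nonzero) dispatches it, and the rest is a routine compactness packaging of the Hadamard bound.
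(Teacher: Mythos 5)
Your proof is correct and takes the same basic geometric idea as the paper---reduce by $2$-homogeneity to the compact $\tilde g$-unit sphere bundle over $Z$, then exploit that $\psi=g(p,p)$ cuts out the null locus transversally and Taylor-expand near it---but the packaging is genuinely different and, in places, tighter. The paper argues by contradiction: it takes an infimizing sequence for the ratio $r=\phi/|\psi|$, shows a subsequential limit $p_\infty$ must be null with $F(p_\infty,p_\infty)=0$, straightens $S_0$ to a coordinate hyperplane, and then expands \emph{both} numerator and denominator to first order, invoking the implicit function theorem to conclude $d\phi(0)$ is parallel to $d\psi(0)$ so that the ratio of the two normal derivatives is the finite limit of $r$. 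You instead argue directly: straighten $\psi$ to a coordinate $\eta$, apply the fundamental-theorem-of-calculus (Hadamard) factorization $\phi(\xi,\eta)=\phi(\xi,0)+\eta\,h(\xi,\eta)$, drop $\phi(\xi,0)\ge 0$, bound $|h|$ on a compact neighborhood, and then patch finitely many such neighborhoods of the compact null locus together with the trivial bound off the null locus. Two things your version buys: (i) you explicitly verify $d\psi\ne 0$ on $N$ via the observation that a vector cannot be simultaneously $g$-null and $\tilde g$-unit with $g(p,\cdot)\propto\tilde g(p,\cdot)$, whereas the paper asserts this transversality; (ii) your Hadamard bound $\phi\ge -|h|\,|\psi|$ holds uniformly whether or not $d\phi$ vanishes at the null point, so you never need to branch on ``unless the derivative of $F$ vanishes at the origin'' as the paper does---the degenerate case is absorbed automatically into the $\sup|h|$ estimate. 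The price is a covering argument in place of a single sequence, but this is routine. (One cosmetic note: the paper's normalization ``take $F$ symmetric'' is also unnecessary in your version since only the quadratic form $F(p,p)$ ever appears.)
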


\begin{proof}
Without loss of generality, take $F$ to be symmetric and assume $g$ is not Riemannian (since the Riemannian case is standard). %to have signature $(+-\cdots-)$.
Then $(M^n,g_{ij})$ also admits a complete Riemannian metric $\tilde g$ by results of Nomizu \& Ozeki \cite{NomizuOzeki61} and Geroch \cite{Geroch68}. 
%as in Definition \ref{D:smooth spacetime}.
Let $S$ denote the subset $(p,z) \in TM$ of the sphere bundle satisfying $\tilde g(p,p)=1$,
and decompose $S=S_+ \cup S_- \cup S_0$ into timelike, spacelike and null-vectors. % call $p$ timelike if $g(p,p)>0$.
By $2$-homogeneity, it is enough to establish the desired bound for all $(p,z) \in S \setminus S_0$ with $z \in Z$.  
Choose an infimizing sequence $(p_j,z_j) \in S\setminus S_0$
for the ratio 
$$
r(p,z) := \frac{F_z(p,p)}{|g_z(p,p)|}
$$
subject to ($z_j$ and) $z \in Z$.  We claim the limit
$$
C_Z := \lim_{j \to \infty} r(p_j,z_j) =  \inf_{(p,z) \in S\setminus S_0} r(p,z)
$$
is finite.  To derive a contradiction,  assume $C_Z=-\infty$.
A  (nonrelabelled) subsequence $(p_j,z_j)$ converges to a limit $(v_\infty,z_\infty)$ by the compactness of $Z$ and the fibres of $S$.
Since $r(p,z)$ takes continuous real values on $S\setminus S_0$,  our assumption $C_Z=-\infty$ implies the denominator must  become vanishly small, hence 
$v_\infty \in S_0$ (i.e. is null).
%In this case $F(v_\infty,v_\infty) \ge 0$ by hypothesis.
% yet $F(v_\infty,v_\infty) > 0$ produces the contradiction $C_Z=+\infty$ to the assumed sign $C_Z = -\infty$.
To estimate the ratio $r(p,z)$,  we shall Taylor expand its numerator and denominator, after first
 choosing smooth coordinates $(x_1,\ldots,x_{2n-1})=(X,x_{2n-1})$ on the sphere bundle $S$ in which $(v_\infty,z_\infty)$ becomes the origin $x=0$ and $S_0$ becomes the boundary of the halfspace $\R^{2n-1}_+$ locally. 
 %, oriented so $x_{2n-1}>0$ corresponds to $S_+$.  
 These coordinates exist because transversality of the intersection of the sphere bundle with the null bundle guarantees $S_0$ is a smooth submanifold of $S$  \cite[\S 1.5]{GuilleminPollack74},
so  any local choice of coordinates on $S_0$ may be extended to the desired coordinates on $S$ using the signed distance  to $S_0$ (induced by restricting the Sasakian extension of
$\tilde g$ to $S\subset TM$).
 To avoid interrupting the flow of ideas,  we postpone verifying the claimed transversality to the end of the proof.
% \marginpar{cite Guillemin and Pollack?}

 The metric $g_z(p,p)$ takes opposite signs on $S_+$ and $S_-$,   hence only the last of its $2n-1$ partial derivatives can be non-vanishing on $\p \R^{2n-1}_+$; it is 
 strictly non-vanishing by the non-degeneracy of $g$. % and by our choice of orientation $S_+=\{x_{2n+1}>0\}$.
% % the asserted transversality of the sphere bundle (= unit level set of $\tilde g_z(p,p)$) with the null bundle (= zero level set of $g_z(v,v)$). 
The numerator $F_z(p,p)$ %also changes from negative to non-negative in each neighourhood of the origin,  since it 
is non-negative on $\p \R^{2n-1}_+$ by hypothesis. 
%but must eventually be negative along the sequence $(p_j,z_j)$.
%Either the derivative of $F(x)$ on $\R^{2n-1}$ vanishes at the origin,  in which case the leading order term below does not arise and it is easy to finish the argument, 
%or it must be parallel to that of $g$ by the implicit function theorem,
%since the nonnegativity of $F_z(v,v)$ on $S_0$ implies the differentiable domain $\{F<0\}$ is disjoint from, hence tangent to, $S_0$ at the origin of $\R^{2n-1}$.   
%%Whether or not the leading term vanishes, 
%Taylor expansion on $S_0$ therefore gives $F(x_1,\ldots,x_{2n-2},0)=o(|x|)$, so Taylor expansion on $S\cap S_0$ shows
Taylor expansion on $S \setminus S_0$ around $S_0$ therefore shows
$$
r (x_1,\ldots,x_{2n-1}) =\frac{F(x)}{|g(x)|}\ge 
\frac{x_{2n-1} \frac{\p F}{\p x_{2n-1}} (X,0) + o(|x_{2n-1}|)}{\left| x_{2n-1} \frac{\p g}{\p x_{2n-1}}(X,0) + o(|x_{2n-1}|)\right|}   % O(|(x_1,\cdots,x_{2n-1})|^2)
$$
tends to a limit $C_Z \ge -|\frac{\p F / \p x_{2n-1}}{\p g /\p x_{2n-1}}(0)|$ as $x=(X,x_{2n-1}) \in \R^{2n-1}\setminus\{x_{2n-1}=0\}$ 
tends to the origin along the original minimizing sequence.  This contradicts 
$C_Z = -\infty$.  
%(Using the fact that the sequence minimizes among a set of real numbers rules out $C_Z = +\infty$ as well.)
% so although it is not needed we learn $F(x) =o(x_{2n+1})$.)
%  [If the differentiable tensor field is smoother,  the error term $o(1)$ can be replaced by $O(|x|)$.]

It remains only to verify the claimed transversality of $S = \{\tilde g(p,p)=1\}$ and $L=\{ g(v,v)=0\} \setminus 0_M$, where $0_M$ is the zero section of $TM$.
Since $\tilde g$ and $g$ are both nondegenerate, the implicit function theorem shows $S$ and $L$ are both smooth hypersurfaces in $TM$.  To show they intersect transversally, it
is therefore sufficient to show at each point  $(v,z) \in S_0 = S \cap L$,  the $2n-1$ dimensional tangent spaces $S_{(v,z)}$ (to $S$) and $L_{(v,z)}$ (to $L$) do not coincide, for then their sum must have full dimension $2n$.
Thus it is enough to show that the $n-1$ dimensional tangent spaces $S_z$ to the ellipsoid $S \cap T_zM$ and $L_z$ to the nullcone $L \cap T_zM$ are distinct.
But this follows from the fact that $v+S_z$ lies outside the ellipsoid, hence contains no ray through the origin of $T_z M$,  whereas $L_z$ contains the lightlike ray $\{\lambda v  \mid \lambda \in \R\}$ through the origin and $v$.
 \end{proof}

\begin{corollary}[(\NC) %is equivalent to 
versus variable timelike lower Ricci bounds]\label{C:NECboundsT}  
Let $(M^n,g_{ij})$ be a smooth pseudo-Riemannian %, signature $(+-\cdots-)$ Lorentzian 
manifold with a continuously differentiable tensor field $F_{ij}$.
Then $F(v,v) \ge 0$ for all null vectors $(v,z) \in TM$ if and only if:
for each compact subset $Z \subset M$  
$$
C_Z^T := \inf_{z \in Z} \inf_{g_z(p,p)>0} \frac{F(p,p)}{g(p,p)} > -\infty
$$
\end{corollary}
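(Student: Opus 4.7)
For the forward direction ($\Rightarrow$), I would apply Theorem~\ref{T:NECboundsT} directly: assuming $F(p,p) \ge 0$ on all null vectors yields, for each compact $Z \subset M$, a finite $C_Z \in \R$ with $F(p,p) \ge C_Z|g(p,p)|$ on the tangent bundle restricted to $Z$. Specializing to timelike vectors, where $|g(p,p)| = g(p,p) > 0$ in signature $(+-\cdots-)$, and dividing through yields $F(p,p)/g(p,p) \ge C_Z$, so $C_Z^T \ge C_Z > -\infty$.

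For the converse ($\Leftarrow$) the plan is to approximate an arbitrary null vector by timelike ones at the same basepoint and pass to the limit in the uniform lower bound on $F/g$ that the hypothesis supplies. I would fix any null $(p_0, z_0) \in TM$; if $p_0 = 0$ then $F(p_0,p_0) = 0 \ge 0$ trivially, so assume $p_0 \ne 0$. Pick any timelike $e \in T_{z_0}M$. Since the orthogonal complement $e^\perp \subset T_{z_0}M$ of a timelike vector is negative-definite under $g_{z_0}$, the nonzero null vector $p_0$ cannot lie in $e^\perp$; hence $g_{z_0}(p_0,e) \ne 0$. Choose a sign $\sigma \in \{+1,-1\}$ with $\sigma\, g_{z_0}(p_0,e) > 0$ and set $p_\epsilon := p_0 + \sigma\epsilon e$ for $\epsilon > 0$. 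The expansion
$$
g_{z_0}(p_\epsilon, p_\epsilon) = 2\sigma\epsilon\, g_{z_0}(p_0, e) + \epsilon^2\, g_{z_0}(e, e)
$$
then shows $p_\epsilon$ is timelike at $z_0$ for all sufficiently small $\epsilon > 0$.

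To finish, I would pick any compact neighborhood $Z$ of $z_0$ and apply the hypothesis to $p_\epsilon$: since $g_{z_0}(p_\epsilon,p_\epsilon)>0$ and $z_0 \in Z$, one has $F(p_\epsilon,p_\epsilon) \ge C_Z^T\, g_{z_0}(p_\epsilon,p_\epsilon)$ for small $\epsilon>0$. Letting $\epsilon\downarrow 0$, the right-hand side tends to $C_Z^T\cdot 0 = 0$ while continuity of $F$ in the fibre sends the left-hand side to $F(p_0,p_0)$, giving $F(p_0,p_0)\ge 0$. I do not anticipate any serious obstacle here: the forward direction essentially restates Theorem~\ref{T:NECboundsT}, and the converse is a routine limit whose only subtlety is matching the sign of $\epsilon$ to the sign of $g_{z_0}(p_0,e)$ to ensure that the perturbed vector lies in the timelike cone rather than among the spacelike vectors.
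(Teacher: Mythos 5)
Your proof is correct and takes essentially the same route as the paper: the forward direction applies Theorem~\ref{T:NECboundsT} and restricts to timelike vectors, while the converse passes from the timelike bound $F(p,p) \ge C_Z^T\, g(p,p)$ to null vectors by a continuity/limiting argument. Your explicit perturbation $p_\epsilon = p_0 + \sigma\epsilon e$ merely fills in the detail that the paper compresses into the phrase ``extends to null $p$ by continuity.''
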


\begin{proof}
One direction follows directly from Theorem \ref{T:NECboundsT}.
We'll show the contrapositive of the other:  if $C^T_{\{z\}}$ is finite then $F(p,p) \ge C^T_{\{z\}} g (p,p)$ for all $p \in T_z M$ with $g_z(p,p)>0$. 
The same extends to null $v$ by continuity,   so $C^T_Z>-\infty$ implies $F(v,v) \ge 0$ for all null $v$, as desired.
\end{proof}

\begin{remark}[Weighted null energy vs weighted Ricci bounds]\label{R:NECboundsT}
On a smooth Lorentzian manifold $(M^n,g_{ij})$ with weighted volume $dm(x)=e^{-V(x)}d\vg(x)$ and $N\ne n$,
the previous corollary applied either to $F=\Rc$ or 
$$F=\Rc^{(N,V)} = \Rc + D^2 V - \frac{1}{N-n} DV \otimes DV
$$
shows non-negativity of the (weighted) Ricci tensor in null directions is equivalent to a local lower bound on the (weighted) Ricci 
curvature in timelike directions.  Alternately, by applying the corollary to the stress-energy tensor $F=T$, we see that the null
energy condition (NE) is equivalent to %asking for 
a variable lower bound on the stress-energy in timelike directions.
\end{remark}

Motivated by this equivalence in the smooth setting, we can define the null energy condition in a metric-measure spacetime as follows:

\begin{definition}[Synthetic null energy conditions]\label{D:NECq}
Given $N>0$ and $0<q<1$,  a proper measured causally geodesic space $(M,d,\ell,m)$ satisfies $\NC^e_q(N)$ if and only if for each compact $Z \subset M$,
there exists $K_Z \in \R$ such that $J(Z,Z)$ satisfies $\TCD^e_q(K_Z,N)$.    Similarly $(M,d,\ell,m) \in w\NC^e_q(N)$
if and only each compact $Z \subset M$ there exists $K_Z\in \R$ such that $J(Z,Z)$ satisfies $w\TCD^e_q(K_Z,N)$.
\end{definition}

\begin{remark}[Alternate definitions and equivalences]  
One can also define $(M,d,\ell,m) \in (w)\NC_q^{(*)}(N)$ if and only if for each compact 
$Z \subset M$, there exists $K_Z \in \R$ such that $J(Z,Z) \in (w)\TCD_q^{(*)}(K_Z,N)$.  
Here  the {\em reduced} timelike curvature dimension condition $\TCD^*_q(K,N)$ and
$\TCD_q(K,N) \subset \TCD_q^*(K,N)$ are defined by slightly different distorted convexity 
requirements of the $N$-R\'enyi entropy along $\ell_q$-geodesics \cite{Braun22.6+}.  Experience with positive signature \cite{Sturm06a} \cite{Sturm06b} \cite{BacherSturm10}
\cite{ErbarKuwadaSturm15} suggests
the more restrictive variant, although harder to work with,
is required to obtain sharp constants in geometric inequalities such as the time-to-singularity
in the Hawking theorem.  When $(M,d,\ell,m)$ is ($q$-essentially) timelike nonbranching,  
Braun \cite{Braun22.6+} establishes the equivalence of the reduced and entropic variants to each other and to
their weak versions ($w\TCD^e_q = \TCD^e_q = \TCD^*=w\TCD^*$) 
and hence the corresponding variations of $\NC_q$.  The case $N=\infty$ is discussed in~\cite{Braun22.5+}.
The equivalence of the above conditions to the sharp variant $\TCD_q$ (and independence of these notions on $q <1$) 
remain to be shown, though their positive signature analogs are known \cite{CavallettiMilman21} \cite{AkdemirCavallettiColinetMcCannSantarcangelo21}.
\end{remark}

Informally,  we say a metric-measure spacetime satisfies the null convergence condition if and only if it satisfies timelike lower Ricci curvature bounds locally,
or equivalently,  if and only if it satisfies a variable lower bound on its timelike Ricci curvature.  This already sheds much light on our desiderata
(a) consistency (b) stability and (c) consequences.  Consistency is almost for free:

 \begin{theorem}[Consistency with the null energy condition]\label{T:NECconsistency}
Let $(M^n,g)$ be a smooth globally hyperbolic spacetime, $0<q<1$, and let $\ell$ and $\vg$ denote the Lorentzian time separation function
and volume measure.
Fix $N \in (0,\infty]$ and $dm = e^{-V}d\vg$ for some $V \in C^\infty(M)$, and a complete auxiliary Riemannian metric $\tilde g$ on $M^n$
which induces a distance $d$.  Then the following are equivalent: (i) $(M,d,\ell,m) \in w\NC^e_q(N)$
(ii) $(M,d,\ell,m) \in \NC^e_q(N)$ and (iii) $N \ge n$ (with $V=const$ as usual in case $N=n$) and every null vector $(v,z) \in TM$ satisfies
$$
\Rc^{(N,V)}(v,v) \ge 0.
$$
\end{theorem}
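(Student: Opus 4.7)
The plan is to establish the three-way equivalence via the cycle (i) $\Rightarrow$ (ii) $\Rightarrow$ (iii) $\Rightarrow$ (i). The first step is immediate from $\TCD^e_q(K,N) \subseteq w\TCD^e_q(K,N)$, which implies $\NC^e_q(N) \subseteq w\NC^e_q(N)$ pointwise in $Z$. The remaining two couple the null Ricci bound in (iii) to the TCD conditions on compact causal diamonds $J(Z,Z)$ appearing in Definition \ref{D:NECq}, by combining Remark \ref{R:NECboundsT} with a localized version of the smooth characterization Theorem \ref{T:consistency}.

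For (iii) $\Rightarrow$ (i), fix a compact $Z \subset M$; global hyperbolicity makes $J(Z,Z)$ compact. Applying Remark \ref{R:NECboundsT} to $F = \Rc^{(N,V)}$ on $J(Z,Z)$ upgrades the pointwise null bound in (iii) to a uniform timelike bound $\Rc^{(N,V)}(v,v) \ge K\, g(v,v)$ for some $K = K(Z) \in \R$, valid at every timelike $v \in TM$ with basepoint in $J(Z,Z)$. Since $J(Z,Z)$ is causally convex, any $\ell_q$-optimal dynamic transport plan between measures in $\cP_c(J(Z,Z))$ remains supported in $J(Z,Z)$; and the argument behind Theorem \ref{T:consistency}(iii) $\Rightarrow$ (ii) only invokes the Ricci bound along the supports of such transports. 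It therefore localizes to yield $J(Z,Z) \in \TCD^e_q(K,N)$, so $(M,d,\ell,m) \in \NC^e_q(N)$.

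For (ii) $\Rightarrow$ (iii), fix $z_0 \in M$ and a null vector $v_0 \in T_{z_0}M$. Choose $z_\pm \in I^\pm(z_0)$ close to $z_0$ and set $Z := J(z_-, z_+)$, so that $J(Z,Z) = Z$ is a compact neighborhood of $z_0$. Hypothesis (ii) delivers $K \in \R$ with $Z \in w\TCD^e_q(K,N)$. The proof of Theorem \ref{T:consistency}(i) $\Rightarrow$ (iii) proceeds by contradiction through two ingredients: Theorem 8.5 of \cite{McCann20} (yielding the timelike Ricci bound) and a dilation-to-Minkowski argument (yielding the threshold $N \ge n$, with $V = const$ when $N = n$). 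Both counterexamples are built from test transports supported in arbitrarily small neighborhoods of $z_0$, hence eventually in the interior of $Z$; localizing these constructions yields $N \ge n$ and $\Rc^{(N,V)}(v,v) \ge K\, g(v,v)$ for every timelike $v \in T_{z_0}M$. Taking a timelike sequence $v_j \to v_0$ and passing to the limit gives $\Rc^{(N,V)}(v_0, v_0) \ge K\, g(v_0, v_0) = 0$, as required.

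The main obstacle is precisely this localization step: verifying that both directions of the smooth characterization Theorem \ref{T:consistency} survive restriction to the causally convex compact sub-diamond $J(Z,Z)$. Causal convexity handles (iii) $\Rightarrow$ (ii) by keeping $\ell_q$-optimal dynamic transports inside $J(Z,Z)$, while the pointwise character of the counterexamples in \cite{McCann20} handles (i) $\Rightarrow$ (iii) by keeping the test transports inside any prescribed neighborhood of the point where the Ricci bound is to be established. Both features are already implicit in the proof of Theorem \ref{T:consistency}, so the present argument amounts to recording that implicit locality.
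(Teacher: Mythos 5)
Your proposal is correct and follows the same route as the paper, which simply cites Theorem \ref{T:consistency}, Corollary \ref{C:NECboundsT} and Remark \ref{R:NECboundsT} and calls the conclusion direct. The useful content you add is to unpack what "direct" means: you observe that (a) causal convexity of $J(Z,Z)$ (so $J(J(Z,Z),J(Z,Z))=J(Z,Z)$ and every $\ell_q$-optimal dynamic plan between measures supported there stays inside) lets the forward implication of Theorem \ref{T:consistency} restrict to the sub-diamond, while (b) the counterexamples behind the reverse implication — McCann's Theorem 8.5 and the dilation-to-Minkowski argument for $0<N<n$ — are local in nature and hence survive restriction to a small diamond $Z=J(z_-,z_+)$ around any base point. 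Your explicit check that $J(Z,Z)=Z$ for such a diamond, and the continuity passage from timelike to null vectors in the last step, are exactly the pieces the paper leaves implicit. In short: same proof, with the locality bookkeeping made visible — which is a reasonable thing to record, though not a departure from the author's argument.
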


\begin{proof}
%This follows essentially from Theorem \ref{T:consistency}, Corollary \ref{C:NECboundsT} and Remark \ref{R:NECboundsT}.
Clearly (ii) implies (i).  
%{\re Conversely, if (i) holds then our smooth globally hyperbolic spacetime satisfies 
%$(M,d_{\tilde g},\ell_g,e^{-V}d\vg) \in w\NC_q^e(N)$.  For each compact $Z \subset M$,  we have $J(Z,Z)$
%compact and causally convex hence a causal geodesic space when equipped the restrictions of $d_{\tilde g}$, $\ell_g$, and $e^{-V}d\vg$.
%Moreover $J(Z,Z) \in w\TCD_q^e(K_Z,N)$ for some $K_Z \in \R$ by Definition \ref{D:NECq}.  But $\ell$-paths in $J(K,K)$ are also $\ell$-paths in $M$,
%hence timeline nonbranchingness of $J(K,K)$ follows from that known for the smooth spacetime $M$.  
%Since Braun has shown that for timelike nonbranching spaces,  the weak and non-weak entropic timelike curvature
%dimension conditions become equivalent \cite{Braun22.6+},  we conclude  $J(Z,Z) \in \TCD_q^e(K_Z,N)$ hence $M \in \NC_q^e(N)$ as desired.
%}
To show (i) implies (iii),  assume (i) holds, so that our smooth globally hyperbolic spacetime satisfies 
$(M,d_{\tilde g},\ell_g,dm=e^{-V}d\vg) \in w\NC_q^e(N)$. Recall a set $Z \subset M$ is called {\em causally convex} if $J(x,y) \subset Z$ for each $x,y \in Z$.
If $y \in M$ there exists a timelike diamond $I(x,z)$ containing $y$.  
Since the causal diamond $J(x,z)$ is compact and causally convex, we
have $J(x,z) \in w\TCD_q^e(K,N)$ for some $K \in \R$ by Definition \ref{D:NECq}. 
Its interior $I(x,z)$ inherits the property of being a smooth globally hyperbolic spacetime from $(M^n,g)$.
And the time-separation function induced on $I(x,z)$ by $g$ coincides with the restriction of $\ell_g$, by global hyperbolicity and the causal convexity of $I(x,z)$.
Thus Theorem \ref{T:consistency} implies $\Rc^{N,V}(p,p) \ge K g(p,p)$ for all timelike hence all causal $p \in T_yM$,  and (iii) follows by arbitrariness of $y \in M$.

We turn to the final implication, (iii) implies (ii).  Assume (iii) holds,  so that $N \ge n$, $V=const$ if $n=N$, and $\Rc^{N,V}(v,v) \ge 0$ holds for all null $(v,y) \in TM$. 
Fix a compact set $Z \subset M$. 
%Theorem \ref{T:consistency} then provides $K_Z$ such that $\Rc^{N,V}(p,p) \ge 0$ holds for all timelike $(p,y) \in TZ$.
The future and past of $Z$ have Lipschitz boundaries according to Hawking \& Ellis \cite[Proposition 6.3.1]{HawkingEllis73}; these cover the boundary
of $J(Z,Z)$.  Let $X:=I(Z,Z)$ denote the interior of $J(Z,Z)$.
Then $X$ is causally convex, smooth and globally hyperbolic.
Moreover Theorems \ref{T:NECboundsT}, Corollary \ref{C:NECboundsT} and Remark \ref{R:NECboundsT} yield $K_Z\in \R$ such that 
$\Rc^{N,V}(p,p) \ge K_Z g(p,p)$ holds for all timelike $(p,y) \in TX$.  
%From Theorem \ref{T:consistency} it would follow that $X \in \TCD_q^e(K_Z,N)$ for some $K_Z \in \R$.  
%We need to show the same is true for the causal geodesic space $Y := J(Z,Z)$,  
Observe $Y := J(Z,Z)$ is a causal geodesic space which differs from $X$ only by the
aforementioned Lipschitz hypersurfaces.  Since these hypersurfaces have $\vg$ measure zero,  any $\mu_\pm \in \cP_c(Y)$ with finite entropy relative to $dm= e^{-V}d\vg$ must be absolutely continuous relative to $\vg$, hence restrict to a probability measure on $X$. If they were compactly supported in $X$ we could apply Theorem 
\ref{T:consistency} to conclude, but since that need not be the case  
assume $(\mu_-,\mu_+)  \in \cP_c(Y)^2$ are timelike dualizable and have finite entropy.
Remark \ref{R:existence of plans} yields
an $\eta \in \cP(\TGeol)$ generating a timelike $\ell_q$-path $\mu_t=(e_t)_\#\eta$ from $\mu_-$
to $\mu_+$.  By causal convexity, the measures $\mu_t$ all vanish outside of $X$ which lies in the compact set $Y$,  hence their entropies enjoy a uniform lower bound
$h(t) = H(\mu_t \mid m) \ge 1 - m[Y]>-\infty$.   
The $(K_Z T^2, N)$-convexity of $h$ now follows from Theorem 7.4 of \cite{McCann20}
as in Corollaries 7.5, 6.6 of the same reference, with Remark 6.7 there yielding $T:= \|L_\ell\|_{L^2(\eta)}$. This shows 
$Y = J(Z,Z) \in \TCD_q^e(K_Z,N)$.  Arbitrariness of the compact set $Z \subset M$ concludes the proof of (ii): $M \in \NC_q^e(N)$.
\end{proof}

\begin{remark}[Relaxing global hyperbolicity]
Global hyperbolicity for metric spacetimes plays a role analogous to bounded compactness (i.e. properness) for metric spaces.
Since the theory of lower Ricci curvature bounds can be developed for Polish rather than proper metric spaces (with properness 
following if $N<\infty$ \cite{Sturm06b}),  it is natural to 
expect that it may be possible to replace the global hyperbolicity assumed in the foregoing theorems by
an appropriate notion of completeness for metric spacetimes~\cite{BeranOctet23+}. 
%suggested to us by Nicola Gigli (personal communication).  
%We defer the exploration of such a notion to future research.
%The physical relevant is unclear to us \cite{Witten}, though one of the referees has suggested that global hyperbolicity is 
%expected to break down in blackhole interiors.
\end{remark}

Taking $V=const$ in \eqref{N-Ricci} yields the null energy condition \eqref{NC} required for the Penrose singularity theorem \cite{HawkingEllis73} \cite{Penrose65a}.  On compact Lorentzian geodesic subsets of $\NC_q^e(N)$ spaces,
consequences follow from those established for $\TCD^e_q(K,N)$ spaces by Cavalletti, Mondino \cite{CavallettiMondino20+} and Braun \cite{Braun22.5+} \cite{Braun22.6+}.  
In positive signature,  variable lower Ricci curvature bounds have been studied by Sturm \cite{Sturm15} \cite{Sturm20}, Ketterer \cite{Ketterer15+} \cite{Ketterer17}, and  Braun, Habermann \& Sturm \cite{BraunHabermannSturm21}.  With Braun, we pursue the development of an analogous theory in Lorentzian signature \cite{BraunMcCann23p}.

On the other hand,  stability cannot hold in general:  we cannot expect the pointed measured limit $(M,d,\ell,m,x)_\infty$ of a  
sequence $(M,d,\ell,m,x)_j \subset \NC^e_q(N)$  in the sense of Cavalletti \& Mondino to lie in $w\NC^e_q(N)$
(unless one is willing to impose some independence of $j$ on the local timelike lower bounds $K_j=K_{Z_j}$ along the sequence).
The following example describes a sequence of smooth metric-measure spacetimes satisfying $\NC^e_q(\infty)$ which converge to a limit that is not in $w\NC^e_q(\infty)$.
While $\spt m_j$ is connected along the sequence,  $\spt m_\infty$ consists of two isolated points, lying on the future and past boundaries of the space.  
Although weights may seem exotic, they appear in physical contexts including Brans-Dicke theory \cite{Woolgar13} and the near horizon geometry of black holes
\cite{GallowayKhuriWoolgar21}. Several possible interpretations of our example seem possible.  On the one hand,  it suggests that severe topogical change may
also occur in the limit of spaces in $\NC_q^e(\infty)$;  this particular change cannot occur in a limit of an analogous sequence $(M,d,\ell,m,\bar x)_j \in \TCD_q^e(K,N)$, because  stability  \cite{CavallettiMondino20+} \cite{Braun22.5+} of the timelike curvature condition implies $\spt \mu_\infty$ is a Lorentzian geodesic space.
 On the other hand, the instability we demonstrate may simply indicate that the combination of $\NC_q^e(\infty)$ with the pointed measured topology do not lead to 
 a physically meaningful notion,  since arbitrarily small perturbations can push $(M,d,\ell,m,x)_\infty$ into $\NC_q^e(\infty)$.
 Perhaps a nonsmooth version of the {\em weak energy condition} 
 \begin{equation}\label{WEC}
%(R_{ab} - \frac12 R g_{ab}) v^a v^b 
T(v,v) \ge 0 \quad \forall g(v,v) \ge 0,
 \end{equation}
 or the {\em dominant energy condition,} which augments \eqref{WEC} by also requiring
 \begin{equation}\label{DEC}
%(R_{ab} - \frac12 R g_{ab}) v^a v^b 
g(Tv,Tv) \ge 0 \quad \forall g(v,v) \ge 0,
 \end{equation}
 would be more amenable to stability,  since the set of timelike directions forms an open set;
 here $8\pi T_{ab} =R_{ab} - \frac12 R g_{ab} -K g_{ab}$ by the Einstein field equation, and $K\in \R$ is the cosmological constant. 
Both \eqref{WEC}--\eqref{DEC} are expected to be satisfied by ordinary (but not quantum) matter,  and their combination
%\eqref{DEC} 
is known to prevent information from propagating faster than the local light speed \cite{HawkingEllis73}.
Bernig, Faifman \& Solanes' theory of pseudo-Riemannian curvature measures has potential relevance to such stability questions \cite{BernigFaifmanSolanes21} \cite{BernigFaifmanSolanes22a}.
 While the following example can be excluded 
either by requiring the metric-measure spacetimes to be globally hyperbolic regular Lorentzian length spaces (thus excluding future or past boundaries),
or by insisting $M_\infty = \spt m_\infty$,  it strongly suggests such modifications cannot restore compactness of $\NC^e_q(\infty)$ in the pointed measured sense,
even if there remains hope either for the analogous compactness of $\TCD^e_q(K,N)$ when $N<\infty$ \cite{Mueller22+} or for precompactness of spaces satisfying \eqref{WEC}.

\begin{example}[Instability of dimensionless null energy condition]
Take $M = \{ (x^1,\ldots, x^{n}) \in \R^n \mid x^1 \in [-1,1]\}$ to be a closed slab equipped the Minkowski metric $g_{ij} dx^i dx^j = dx_1^2 - \sum_{i=2}^n dx_i^2$,  Euclidean metric
$\tilde g_{ij} dx^i dx^j = \sum_{i=1}^n dx_i^2$, and
associated time-separation function $\ell=\ell_g$, distance $d=d_{\tilde g}$, and volume $\vg$.  
Fix the point $(x^1,\ldots,x^n)_j =0$ for all $j$. Only the reference measure $dm_j(x) = \exp (c_j-V_j(x)) d\vg(x)$ will vary along the sequence.
Taking $V_j(x) = - j g(x,x)$ yields Bakry-Emergy tensor $\Rc^{(\infty,V_j)}_{ab} = - j g_{ab}$ whose timelike bound from below is $-j$,
so $(M,d,\ell,m,\bar x)_j \in \NC_q^e(\infty)$ if $j<\infty$ by Remark~\ref{R:NECboundsT}.
On the other hand, choosing $c_j$ so that $m_j$ is a probability measure
ensures $m_j$ converges narrowly to $m_\infty = \frac12 [\delta_z + \delta_{-z}]$,  because $V_j(\pm x)$ attains its maxima only at $z=(1,0,\ldots,0)$.
But $(M,d,\ell,m,\bar x) \not \in \NC_q^e(\infty)$ since $x(t) = (2t-1)z$ makes the entropy of $\mu_t = \delta_{x(t)}$ finite precisely at its endpoints, % $\mu_0$ and $\mu_1$,
and $(\mu_t)_{t \in [0,1]}$ is the unique $\ell_q$-path joining these endpoints because $x(t)$ is the unique $\ell$-path joining $x(0)=-z$ to $x(1)=z$.
\end{example}

A very interesting open question would be to determine whether a nonsmooth analog of the Penrose singularity theorem holds for $\NC^e_q(N)$
spaces.
Just as Cavalletti \& Mondino's analog of Hawking's theorem proves that big bang type singularities are an unavoidable consequence of instantaneous expansion even in nonsmooth models of the universe,  so an analog of the Penrose theorem would establish that even in the nonsmooth setting of $\NC^e_q(N)$ causal geodesic spaces,  null trapped surfaces must inevitably lead to the incomplete null geodesics signalling singularities stemming from stellar collapse.  Such a theorem has already been established in the manifold setting with a $C^1$ metric tensor $g$ by Graf~\cite{Graf20} (and for $g \in C^{1,1}$ in the earlier works that she cites).  Simultaneously and independently of the present work, Ketterer \cite{Ketterer23+} has shown in the smooth setting that both Hawking's area monotonicity \cite{Hawking72} and the Penrose singularity theorem follow from the displacement convexity of R\'enyi's power law entropy for measures on null geodesics.  %Simultaneously and independently of the present work, 
He has also shown in the smooth setting that this convexity gives another characterization of the null convergence condition.   Whether similar ideas extend to the nonsmooth setting is an intriguing open question.

\section*{Declarations}

%\subsubsection*
{\bf Data Availability:} This article does not involve the use of any data.

\medskip\noindent
%\subsubsection*
{\bf Competing Interests:} The author has no competing interests to declare that are relevant to the content of this article.

\medskip\noindent
%\subsubsection*
{\bf Funding:} The author's research is supported in part by the Canada Research Chairs program CRC-2020-00289, the Simons Foundation, Natural Sciences and Engineering Research Council of Canada Discovery Grant RGPIN- 2020--04162, and Toronto's Fields Institute for the Mathematical Sciences, where part of this work was performed.

\bibliographystyle{plain}%{mf}
\bibliography{../newbib.bib}

\end{document}